\newtheorem{theorem}{Theorem}[section]
\newtheorem{remark}[theorem]{Remark}
\newtheorem{definition}[theorem]{Definition}
\newcommand\red[1]{{\color{black}#1}}
\begin{document}
\begin{center}

\textbf{\Large Exhaustive goodness-of-fit \\via smoothed inference and graphics}
\vspace{1cm}

Sara Algeri$^{1,\dag}$ and Xiangyu Zhang$^1$\\
\vspace{0.5cm}
{\small $^1$ School of Statistics, University of Minnesota, \\0461 Church St SE, Minneapolis, MN 55455, USA.}\\
{\small $^\dag$ Email: salgeri@umn.edu}\\
\end{center}

\begin{abstract}
Classical tests of goodness-of-fit aim to validate the conformity of a postulated model to the data under study. Given their inferential nature, they can be considered a crucial step in confirmatory data analysis. In their standard formulation, however, they do not allow exploring how the hypothesized model deviates from the truth nor do they provide any insight into how the rejected model could be improved to  better fit  the data.
The main goal of this work is to establish a comprehensive framework for goodness-of-fit which naturally integrates modeling, estimation, inference and graphics. Modeling and estimation focus on a novel formulation of  smooth tests that easily extends to arbitrary distributions, either continuous or discrete. Inference and adequate post-selection adjustments are performed via a specially designed smoothed bootstrap and the results are summarized via an exhaustive graphical tool called \emph{CD-plot}. \red{Technical proofs, codes and data are provided in the Supplementary Material. }
\end{abstract}

{\small
Goodness-of-fit, smooth tests,  smoothed bootstrap, graphical inference.
}


\section{Introduction}
Tests for goodness-of-fit such as \cite{pearson}, \cite{anderson}
and \cite{shapiro} are some of the most popular methods used to assess \underline{if} a model postulated by the scientists deviates significantly from the true data distribution. Because of their inferential nature, they can be framed in the context of confirmatory data analysis but they provide little or no insight from an exploratory and/or modeling perspective.
Specifically, when the postulated model is rejected, they do not equip practitioners with any insight on \underline{how} the latter deviates from the truth nor do they provide indications on how the rejected model can be improved to   better fit  the data.

A more comprehensive approach to goodness-of-fit that naturally addresses these drawbacks is given by smooth tests. They were first introduced by \citet{neyman37} as a generalization of the Pearson $\chi^2$ test and were further extended by \citet{barton53, barton55,barton56}. The main idea is to conduct a test of hypothesis where the alternative model embeds the null as a special case through a series of orthonormal functions. As a result, if the hypothesized model is rejected, the alternative model naturally allows correcting it to provide a better fit for the data.
Despite the existence of smooth tests for regular distributions \citep[e.g.,][]{rayner86, rayner88}, the specification of the orthonormal functions being used is not universal and typically depends on the distribution under comparison. Furthermore, the resulting inference is strongly affected by the model selection process and often relies on asymptotic results.

The main goal of this work is to generalize smooth tests by introducing a unifying framework which (i) easily generalizes to both continuous and discrete data in arbitrarily large samples, (ii) naturally leads to an efficient sampling scheme to perform variance estimation and inference while accounting for model selection and (iii) allows us to graphically assess  where a significant deviation of the hypothesized model from the truth occurs.
In our strategy, the key ingredient to tackle (i)  is  a specially designed orthonormal basis first introduced by 
\citet{LP13} and \citet{LP14} in the context of the so-called \emph{LP approach to statistical modeling}\footnote{In the \emph{LP} acronym, the letter \emph{L} typically denotes nonparametric methods based on quantiles, whereas \emph{P} stands for polynomials \cite[Supp S1]{LPksamples}.}. To address (ii) we study and further extend a novel smoothed bootstrap which naturally applies to both continuous and discrete data and, most importantly, allows us to efficiently perform variance estimation, inference and adequate post-selection adjustments  for arbitrarily large samples. To achieve (iii) we propose the so-called \emph{Comparison Density plot} or \emph{CD-plot}, a graphical tool for goodness-of-fit that allows us to assess simultaneously \underline{if} and \underline{how} significant deviations of the hypothesized from the true underlying model occur. 
The \emph{CD-plot}  was originally proposed in \citet{algeri20} in the context of signal detection of astrophysical searches involving large continuous-valued data samples. Here, we extend \citet{algeri20} to arbitrary large samples from either continuous or discrete distributions. 

Furthermore, we introduce a novel sampling scheme, called the \emph{bidirectional acceptance sampling algorithm}, which allows us to simulate from two different distributions simultaneously and, consequently, improves the computational performance of the LP-smoothed bootstrap in our setting.
Finally, we apply the methods proposed to identify the distribution of the time from the onset of symptoms to hospitalization of COVID-19 patients. 

The remainder of the article is organized as follows. In Section \ref{overview}, we  review smooth tests and in Section \ref{framework} we reformulate them in the context of LP modeling. In Section \ref{LPboot} we introduce the CD-plot and the LP-smoothed bootstrap. Section \ref{double} is dedicated to bidirectional acceptance sampling. Important extensions of the method proposed are covered in Section \ref{extensions}. The analysis of COVID-19 hospitalization data is presented in Section \ref{COVID}. A discussion is proposed in Section \ref{discussion}. \red{ Technical proofs, data and the \texttt{R} codes and data needed to reproduce the results presented in this paper are provided among the Supplementary Material.
Finally, the \texttt{R} package \texttt{LPsmooth} \citep{LPsmooth} aims to facilitate the implementation of the methods proposed in practical applications using \texttt{R} programming \citep{Rprogramming}. }

\section{Background: comparison distributions and smooth tests}
\label{overview}
Smooth tests are a class of goodness-of-fit inferential methods that rely on the specification of a \emph{smooth model}, of which the model hypothesized by the researcher is a special case. Several classes of smooth models have been proposed in the literature \citep[e.g.,][]{neyman37,barton56} and involve an orthonormal expansion of the so-called \emph{comparison density} \citep{parzen83}.
\begin{definition}
Let $X$ be a random variable, either discrete or continuous, with probability function $f$ and cumulative distribution function (cdf) $F$ and let $g$ be a suitable probability function, with cdf $G$, same support of $F$, and quantile function $G^{-1}$. The comparison density between $F$ and $G$ can then be specified as
\begin{equation}
\label{cd}
d(u;G,F)=\frac{f\bigl(G^{-1}(u)\bigl)}{g\bigl(G^{-1}(u)\bigl)} \qquad\text{with $u=G(x)$,}
\end{equation}
and we assume that $f=0$ whenever $g=0$.
\end{definition}
In \eqref{cd}, $g$ is known in the literature as \emph{parametric start} \citep{hjort95} or \emph{reference distribution} \citep{morris}.
The \emph{comparison distribution} is defined as $D(u)=\int_0^ud(s;G,F)\text{d} s$. As noted by \citet{parzen2004}, in the continuous case $D(u)=F(G^{-1}(u))$ for all $u\in[0,1]$; whereas, for $X$ discrete, $D(u)$ is piecewise linear at values $u_r=G(x_r)$, where $x_1<\dots< x_R$ are probability mass points of $X$ and\begin{equation}
\label{relationship}
D(u_r)=F(G^{-1}(u_r))=F(x_r).\end{equation}
It follows that, in the discrete case, the comparison density is a step function (e.g., bottom left panel in Figure \ref{fig1}) with values $d(u; G,F)=f(x_r)/g(x_r)$ for $G(x_{r-1})<u\leq G(x_r)$ \citep[see][p.18, for more details]{morris}.
Throughout the manuscript, we will mainly consider the case where the parametric start is fully known; the case where $g$ is characterized by a set of unknown parameters is discussed in Section \ref{extensions}.

On the basis of \eqref{cd}, a smooth model is
\begin{equation}
\label{skewG}
f_m(x)=g(x)d_m(G(x);G,F),
\end{equation}
where $d_m(G(x);G,F)$ is a representation of \red{the comparison density in} \eqref{cd} by means of a series of $m$ functions of $G(x)$, denoted by $h_j(G(x))$, which together form a complete orthonormal basis on $[0,1]$. Table \ref{families} summarizes possible specifications of $d_m(G(x);G,F)$ proposed in the literature. Clearly, $d_m(x)=d(x)$ and $f_m(x)=f(x)$, whenever $m=R-1$, if $X$ is discrete, with $R$ being the total number of distinct mass points of $X$, or when $m=\infty$, if $X$ is continuous. For the moment, we consider $m$ to be chosen arbitrarily; a discussion on the choice of $m$ is postponed to Section \ref{extensions}.

\begin{table}
{\fontsize{3.3mm}{3.3mm}\selectfont{
\begin{center}
\begin{tabular}{|c|c|}
\hline
\vspace{-0.3cm}
& \\
Method &$d_m(u;G,F)$\\
\vspace{-0.3cm}
& \\
\hline
\vspace{-0.2cm}
& \\
Neyman (1937) & $\exp\bigl\{\tau_0+\sum_{j=1}^{m}\tau_jh_{j}(u)-K_{\bm{\tau}}\bigl\} $ \\
\vspace{-0.2cm}
& \\
Barton (1953) & $1+\sum_{j=1}^{m}\theta_jh_{j}(u) $ \\
\vspace{-0.2cm}
& \\
Devroye-Gy\"orfi (1985)& $\max\bigl\{0,1+\sum_{j=1}^{m}\theta_jh_{j}(u)\bigl\}/K_{\bm{\theta}}$\\
\vspace{-0.2cm}
& \\
Gajek (1986) & $\max\bigl\{0,1+\sum_{j=1}^{m}\theta_jh_{j}(u)-K_{\bm{\theta}}\bigl\}$\\
\hline
\end{tabular}
\end{center}}}
\caption{Possible representations of $d_m(u;G,F)$. The functions $h_{j}(u)$ form a complete orthonormal basis on
$[0,1]$, and $K_{\bm{\tau}}$, $K_{\bm{\theta}}$ are normalizing constants; whereas, $\bm{\tau}=(\tau_0,\dots,\tau_m)$ and $\bm{\theta}=(\theta_1,\dots,\theta_m)$ denote vectors of unknown coefficients such that
$\theta_j=\int_0^1h_j(u)d_m(u;G,F)$, for all $j=1,\dots,m$. }
\label{families}
\end{table}
Finally, given a set of independent and identically distributed (i.i.d.) observations $x_1,\dots,x_n$ from $X_1,\dots,X_n$, with $X_i\sim F$, for all $i=1,\dots,n$, a smooth test is constructed by testing, for any of the models in Table \ref{families}, the hypotheses $H_0:\bm{\theta}=0$ versus $H_1:\bm{\theta}\neq 0$ (or $H_0:\bm{\tau}=0$ versus $H_1:\bm{\tau}\neq 0$ for \citet{neyman37}). The test statistic takes the form\begin{equation}
\label{v}
W=\sum_{j=1}^m\biggl[\frac{1}{\sqrt{n}}\sum_{i=1}^nh_j(G(x_i))\biggl]^2,
\end{equation} and follows, asymptotically, a $\chi^2_m$ distribution under $H_0$ (see \citet[Ch. 4]{thas} for a self-contained review on smooth tests). Finally, a rejection of $H_0$ implies that the smooth model fits the data significantly better than the hypothesized model $g$.

\section{Smooth tests via LP score functions}
\label{framework}
The smooth tests discussed in Section \ref{overview} can be applied to both continuous and discrete distributions \citep[e.g.,][Ch. 8]{rayner2009}, however, they often require the specification of an adequate orthonormal system on a case-by-case basis \citep[e.g.,][Ch. 9-11]{rayner2009}. The goal of this section is to exploit the novel LP approach to statistical modeling first introduced by \citet{LP14} to provide a generalized formulation of smooth tests that extends to arbitrary distributions.

\red{
\subsection{The LP score functions}
The LP-score functions first  introduced by \citet{LP14}, can be constructed as follows. } Given a random variable $X$, either continuous or discrete, we can construct a complete orthonormal basis of LP score functions for $G(x)$ by setting the first component to be $T_0(x;G)=1$, whereas subsequent components $\{T_j(x;G)\}_{j>0}$ can be obtained by Gram--Schimidt orthonormalization of powers of
\begin{equation}
\label{T1}
T_1(x;G)=\frac{G^{\text{mid}}(x)-0.5}{\sqrt{[1-\sum_{x\in \mathcal{X}}g^3(x)]/12}},
\end{equation}
where $\mathcal{X}$ is the set of distinct points in the support of $X$, $g(x)=P(X=x)$ when $X\sim G$ and $G^{\text{mid}}(x)=G(x)-0.5p_G(x)$ is the so-called \emph{mid-distribution function}, with mean and variance given by $0.5$ and $[1-\sum_{x\in \mathcal{U}}p^3_G(x)]/12$, respectively \citep{parzen2004}. Interestingly, when $X$ is continuous, $G^{\text{mid}}(x)=G(x)$ and the LP score functions can be expressed as normalized shifted Legendre polynomials.

\begin{figure*}[htb]
\begin{tabular*}{\textwidth}{@{\extracolsep{\fill}}@{}c@{}c@{}}
\hspace{-1cm}\includegraphics[width=83mm]{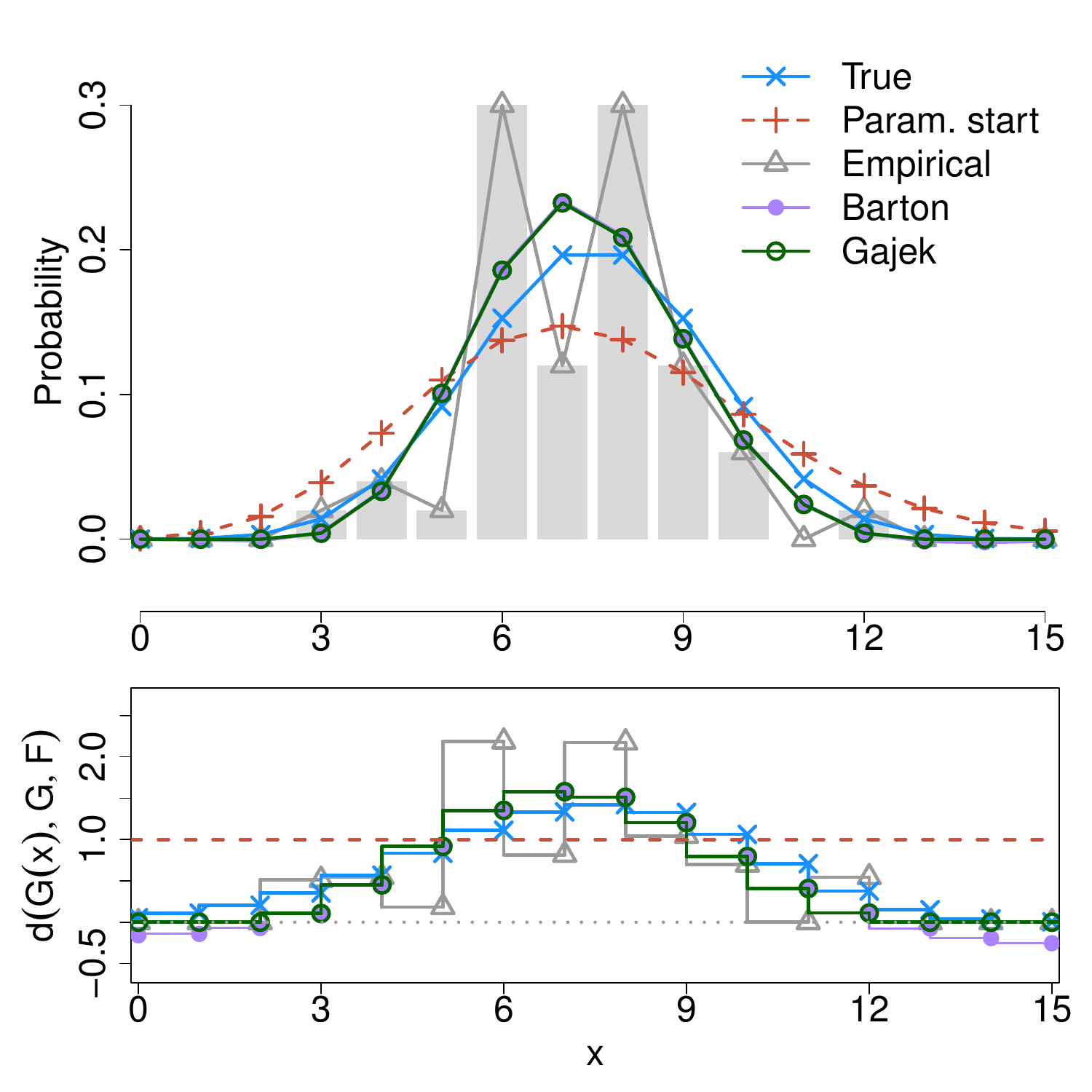} &\hspace{-0.5cm} \includegraphics[width=83mm]{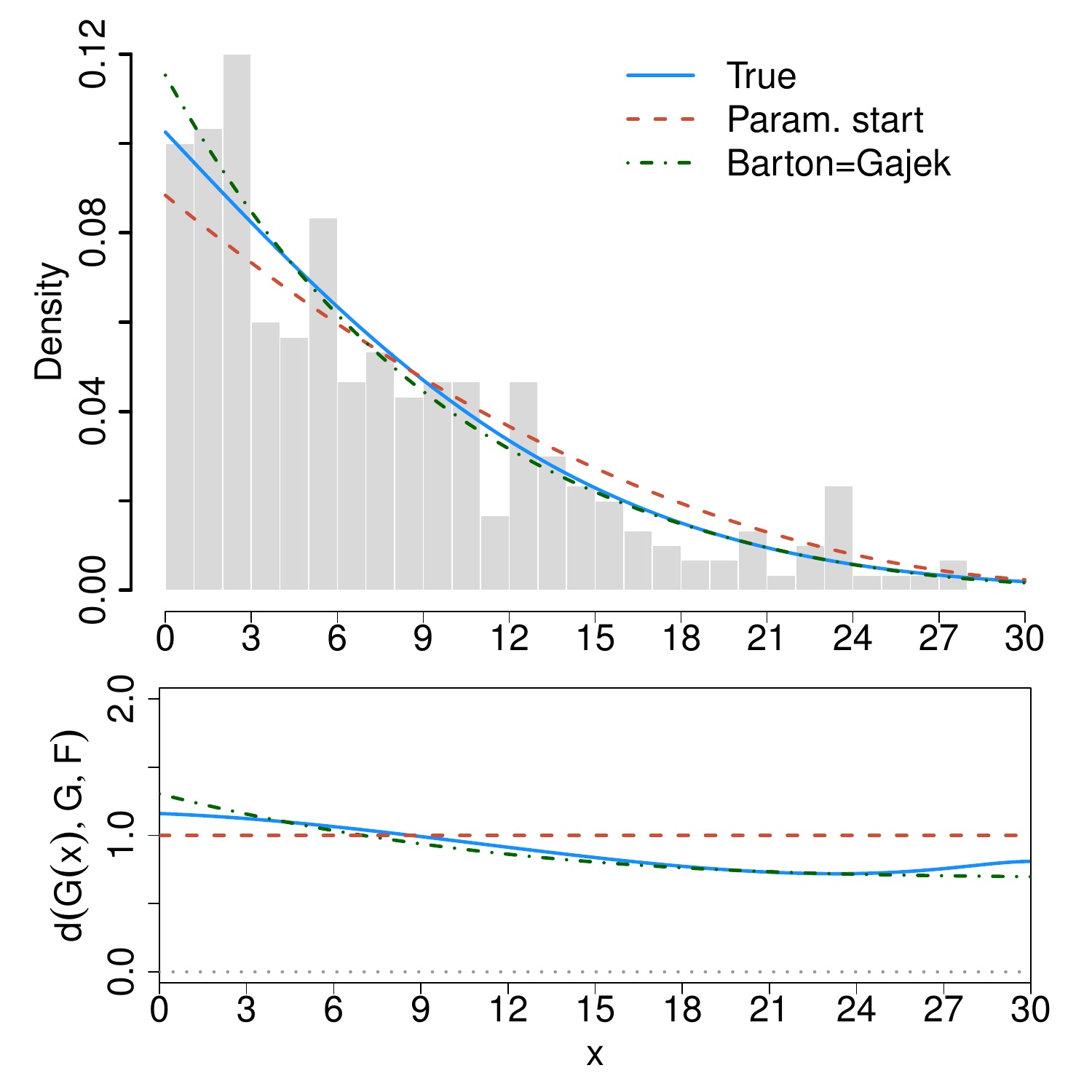}\\
\end{tabular*}
\caption{Comparison of different estimators of $f$ and $d$ when $X$ is discrete (left panels) and when $X$ is continuous (right panels). The upper left panel shows various estimators of the probability mass function (pmf) $f$ when $X\sim \text{Binomial(15,0.5)}$, $n=50$ and the parametric start is the pmf of a $\text{Poisson}(7.5)$ random variable truncated over the range $[0,15]$. The bottom left panel compares the respective comparison density estimators. In the upper right panel, the true probability density function (pdf) $f$ is that of a $\text{Normal(-15,15)}$ random variable truncated over the range $[0,30]$. In this case, $n=300$ and the parametric start is the polynomial density in \eqref{poly}. In this setting, \eqref{barton} leads to a bona fide estimate and thus it coincides with \eqref{gajek}. The respective comparison densities are shown in the bottom right panel. In all the cases considered, the smoothed estimators have been computed choosing $m=2$. }
\label{fig1}
\end{figure*}

\red{
\subsection{LP-based smooth tests}}
To generalize our framework to both the continuous and discrete settings, it is particularly useful to express the quantities of interest in the quantile domain by means of the probability integral transform $U=G(X)$. Specifically, denote with $\{S_j(u;G)\}_{j\geq0}$ the basis of LP score functions expressed in the quantile domain and with respect to the probability measure $G$, i.e., $S_j(u;G)=T_j(G^{-1}(u);G)$. If $d\in L_2[0,1]$; we can rewrite the models in Table \ref{families} in terms of $LP$ score functions by setting
\begin{align}
\label{setting}
h_j(u)=S_j(u;G),\qquad \theta_j&=LP_j=\int_0^1S_j(u;G)d(u;G,F)\text{d} u,\\
\label{setting2}
\text{and ${\bm{\tau}}$ satisfies the constraints }&\quad LP_j=\int_0^1 S_j(u;G)d_{\bm{\tau}}(u;G,F)\text{d} u,
\end{align}
where $d_{\bm{\tau}}(u;G,F)$ in \eqref{setting2} is the comparison density representation in the formulation of \citet{neyman37} (see Table \ref{families}).

From \red{the decomposition in} \eqref{skewG}, it follows that smooth estimators of $f$ can be specified as
\begin{equation}
\label{skewGest}
\widehat{f}(x)=g(x)\widehat{d}(G(x);G,F),
\end{equation}
where $\widehat{d}(G(x);G,F)$ is an estimate of the comparison density obtained by replacing suitable estimators of $LP_j$ and/or $\tau_j$ in any of the models in Table \ref{families}.
A simple strategy to estimate the $LP_j$ coefficients is to consider their sample counterparts. Specifically, let $\tilde{F}$ be the empirical cdf, i.e., $\tilde{F}(x)=\frac{1}{n}\sum_{i=1}^nI_{\{X_i\leq x\}}$, where $I_{\{\cdot\}}$ is the indicator function and denote with $\tilde{d}$ the empirical comparison density, i.e., $\tilde{d}(u;G,F)=d(u;G,\tilde{F})$. The $LP_j$ coefficients can be estimated via
\begin{align}
\label{LPj}
\widehat{LP}_j&=\frac{1}{n}\sum_{i=1}^nT(x_i;G)=\int_0^1S_j(u;G)\tilde{d}(u;G,F)\text{d} u,
\end{align}
and thus we can calculate
\begin{align}
\label{ecd}
\tilde{d}(u;G,F)&=1+\sum_{j=1}^{\tilde{m}}\widehat{LP}_j S_j(u;G)\quad \text{with}\quad \tilde{m}=\min\{R-1,n-1\}.
\end{align}
From \eqref{LPj}, it follows that
\begin{equation}
\label{momentsLP}
E[\widehat{LP}_j]=LP_j\qquad\text{and}\qquad \text{cov}(\widehat{LP}_j,\widehat{LP}_k)=\frac{\sigma_{jk}}{n},
\end{equation}
with $\sigma_{jk}=\text{cov}[S_j(U;G),S_k(U;G)]$. Finally, estimates of the coefficients $\tau_j$ can be obtained numerically by replacing $\widehat{LP}_j$ to $LP_j$ in \eqref{setting2} \citep[e.g.,][]{LPmode}.

Finally, the equivalent of \red{the test statistics in} \eqref{v} can be specified by considering the so-called deviance statistics \citep{LPfdr,algeri20}
\begin{equation}
\label{deviance}
D=n\sum_{j=1}^m\widehat{LP}^2_j,
\end{equation}
which  follows, asymptotically and under $H_0$, a $\chi^2_m$ distribution.
\red{It is worth pointing out  the main steps outlined in this section date back to \citet{neyman37} and \citet{barton53}.  Whereas, the use of the LP score functions to extend the estimation framework also to the discrete case is due to \citet{LP14}. The estimators considered in \citet{LP14} (see also \citet{LPmode}) are the Neyman and Barton's estimators (see Table \ref{families}). Whereas, in this article, the focus is on the use of the LP score functions in the inferential stage, and the estimator we rely upon is the Gajek's estimator.}

\red{
\subsection{The Gajek estimator}}
\label{estgajeksec}
Although any of the models in Table \ref{families} can be used to obtain estimates of $d$ and $f$,
in this article we mainly focus on the model proposed by \citet{gajek86} and estimated via
{\fontsize{3.3mm}{3.3mm}\selectfont{
\begin{equation}
\label{gajek}
\widehat{f}_m(x)=g(x)\widehat{d}_m(G(x);G,F)\quad\text{with}\quad \widehat{d}_m(G(x);G,F)= \max{\biggl\{0,1+\sum_{j=1}^{m}\widehat{LP}_jT_j(x;G)-K \biggl\}},
\end{equation}}}
where $K$ is chosen to guarantee that $\widehat{f}_m(x)$ integrates/sums to 1. \red{The notation $\widehat{d}_m(u; G, F)$ is used to emphasize that we are estimating the comparison density between the parametric start $G$ (i.e., the distribution in the second argument) and the true model $F$ (i.e., the distribution in the third argument). Whereas, the subscript $m$ highlights that the estimator relies on the first $m$ terms of the expansion.} Notice that, $\widehat{f}_m(x)$ is obtained by adequately correcting Barton's estimate
\begin{equation}
\label{barton}
\dot{f}_m(x)=g(x)\dot{d}_m(G(x);G,F)\quad\text{with}\quad \dot{d}_m(G(x);G,F)= 1+\sum_{j=1}^{m}\widehat{LP}_jT_j(x;G)
\end{equation}
to provide a \emph{bona fide} estimate of $f$, i.e., nonnegative and with integral/sum equal to one.
Furthermore, let the Mean Integrated Squared Error (MISE) of an estimator $\widehat{d}$ of $d$ be $MISE(\widehat{d})=\int_0^1 \bigl[\widehat{d}(u;G,F)-d(u;G,F)\bigl]^2\text{d} u$;
the work of \citet{gajek86} showed that $MISE(\widehat{d}_m) \leq MISE(\dot{d}_m)$, whereas, the same result is not guaranteed when considering other bona fide
corrections in the formulation of \citet{DG} in Table \ref{families} \citep{kaluszka}.

The estimators in \eqref{ecd}, \eqref{gajek} and \eqref{barton} are compared using two illustrative examples in Figure \ref{fig1}. The left panels show the results obtained when considering $n=50$ observations from $X\sim\text{Binomial(15,0.5)}$ with parametric start $g$ (red vertical crosses) corresponding to the pmf of a $\text{Poisson}(7.5)$ random variable truncated over the range $[0,15]$. In the upper left panel the Barton's estimator of $f$ (purple dots) leads to nonnegative values and thus the respective Gajek's correction in \eqref{gajek} is computed (green circles); both \red{estimators in} \eqref{gajek} and \eqref{barton} are computed choosing $m=2$. The two smoothed estimators of $f$ show only minor differences from one another, however, they differ substantially from the empirical mass function, i.e., $\tilde{f}(x)=\frac{1}{n}\sum_{i=1}^nI(x_i=x)$ (gray triangles) and provide estimates which are closer to the truth (blue crosses).
To highlight the differences between the discrete and the continuous settings, the right panels show the results obtained when considering $n=300$ observations from a $\text{Normal(-15,15)}$ random variable truncated over the range $[0,30]$. Here, the parametric start $g$ (red dashed lines) corresponds to the polynomial density
\begin{equation}
\label{poly}
g(x)=\frac{1}{w}[4.19-0.25x+0.0038x^2]\qquad\text{with $x\in[0,30]$},
\end{equation}
and $w$ is a normalizing constant. In this case, choosing $m=2$, the Barton estimator in \eqref{barton} leads to a bona fide estimate (green dotted--dashed line) and coincides with \red{the Gajek estimator in }\eqref{gajek}.

\red{A useful feature of} estimators of the form in \eqref{skewGest} is that the graph of $\widehat{d}(u;G,F)$ allows us to visualize  where  and  how  the true model of the data deviates from the hypothesized model $g$. For our toy examples, the graphs of different estimators of $d$ are displayed in the bottom panels of Figure \ref{fig1}. In the discrete case, the comparison density estimators considered are below one in correspondence of the most extreme quantiles, suggesting that the   Poisson pmf  overestimates  the tails of the true underlying Binomial distribution. In the continuous setting,  the comparison density estimator deviates mildly from one, suggesting that the  polynomial pdf may overestimate  the right tail of the truncated normal.   Finally, the upper panels show how, in virtue of the decomposition in \eqref{skewGest}, $\widehat{d}$ automatically  updates $g$ in the direction of $f$.


\section{Inference and graphics via LP-smoothed bootstrap}
\label{LPboot}
Despite the graph of $\widehat{d}_m(u;G,F)$ in \eqref{gajek} (or more broadly, of a suitable estimator $\widehat{d}$ of $d$) allows us to explore the nature of the deviation of $f$ from $g$, it does not provide any insight on the significance of such deviations. Conversely, smooth tests based on \red{the test statistics in \eqref{v} or \eqref{deviance}}, implicitly aim to test
{\fontsize{3.5mm}{3.5mm}\selectfont{
\begin{equation}
\label{test}
H_0: d(u;G,F) = 1\quad \text{for all $u \in[0,1]$}\qquad\text{vs}\qquad H_1 : d(u;G,F) \neq 1 \quad \text{for some $u \in[0,1]$},
\end{equation}}}
while in practice we test for
{\fontsize{3.5mm}{3.5mm}\selectfont{
\begin{equation}
\label{test2}
H_0: LP_1=\dots=LP_m= 0 \qquad\text{vs}\qquad H_1 : \text{at least one $LP_j\neq0$, $j=1,\dots,m$.}
\end{equation}}}
Notice that $H_0$ in \eqref{test} implies $H_0$ in \eqref{test2}, the opposite is not true in general. Whereas, $H_1$ in \eqref{test2} does imply $H_1$ in \eqref{test}, and thus smooth tests allow us to determine  if $f$ deviates significantly from $g$. However, they do not assess  where and  how significant departures of $f$ from $g$ occur. Therefore, to gain more insights on this aspect, in the next section we discuss how to complement the graph of $\widehat{d}_m$ with suitable confidence bands and graphically assess the validity of $H_0$ in \eqref{test}. 

\begin{algorithm}[!h]
\label{CIalgo}
\caption{Computing confidence bands and deviance tests via Monte Carlo.}
{\fontsize{3.3mm}{3.3mm}\selectfont{
\begin{tabbing}
{INPUTS:} sample observed $\bm{x}=(x_1,\dots,x_n)$, parametric start $g$, significance level $\alpha$, \\
\quad\qquad\qquad number of LP score functions $m$, number of Monte Carlo replicates $B$.\\
{Step 1:} Estimate $\widehat{LP}_{j}$, $j=1,\dots,m$ via \eqref{LPj} on $\bm{x}$.\\
{Step 2:} Compute \eqref{deviance} and call it $D_{\text{obs}}$.\\
{Step 3:} For $b=1,\dots, B:$\\
\qquad\qquad{A.} Sample $\bm{x}^{(b)}_G$ from $G$; \red{$\bm{x}^{(b)}_G$ must be of the same size as $\bm{x}$.} \\
\qquad\qquad{B.} On $\bm{x}^{(b)}_G$, compute  $\widehat{LP}^{(b)}_{j}$, $D^{(b)}$, $\widehat{d}^{(b)}_{m}$ via \eqref{LPj}, \eqref{deviance}, \eqref{gajek}, respectively.\\
{Step 4:} For each $u \in [0,1]$:\\
\qquad\qquad{A.} $\hat{\bar{d}}_{m}(u;G,F)=\frac{1}{B}\sum_{b=1}^B\widehat{d}^{(b)}_{m}(u;G,F)$\\
\qquad\qquad{B.} $SE_{\widehat{d}_m}(u|H_0)=\frac{1}{B}\sum_{b=1}^B\bigl(\widehat{d}^{(b)}_{m}(u;G,F)-\hat{\bar{d}}_{m}(u;G,F)\bigl)^2$\\
\qquad\qquad{C.} $\Delta(u)^{(b)}=\Bigl|\frac{\widehat{d}^{(b)}_{m}(u;G,F)-1}{SE_{\widehat{d}_m}(u|H_0)}\Bigl|$\\
{Step 5:} Estimate the quantile of order $1-\alpha$ of the distribution of $\max_u\Delta(u)$, i.e.,\\
\qquad\qquad\qquad $c_\alpha=\Bigl\{c: \frac{1}{B}\sum_{b=1}^BI\{\max_u\Delta(u)^{(b)}\geq c\} = \alpha \Bigl\} $.\\
{Step 6:} Combine Step 3B and Step 4 and compute \eqref{CIband}.\\
{Step 7:} Estimate the deviance test p-value via  $P(D\geq D_{\text{obs}}|H_0)=\frac{1}{B}\sum_{b=1}^BI\{D^{(b)}\geq D_{\text{obs}}\}$.
\end{tabbing}}}
\end{algorithm}
\subsection{Confidence bands and CD-plot}
\label{confidence}
When constructing confidence bands, we must take into account that their center and their width are determined by the bias and the variance of the comparison density estimator considered. In this section and those to follow, we focus on the estimator in \eqref{gajek}; however, \red{our considerations can be easily extended to any other estimator of $d$ which can be shown to be unbiased under $H_0$}.

Our $\widehat{d}_m$ estimator only accounts for the first $m+1$ LP score functions. Therefore, unless one were to assume that the true model is a special case of \red{the one specified in} \eqref{skewG} \citep[e.g.,][]{neyman37}, $\widehat{d}_m$ is a biased estimator of $d$ and confidence bands constructed around $\widehat{d}_m$ can potentially be shifted away from the true comparison density $d$. Although the bias cannot be easily quantified in a general setting, it is easy to show that, when $H_0$ in \eqref{test} (and consequently in \eqref{test2}) is true, $\widehat{d}_m$ is an unbiased estimator of \red{$d\equiv 1$} \citep[e.g.,][]{algeri20}. Hence, we can exploit this property to construct simultaneous confidence bands under the null hypothesis.
Furthermore, the variance of $\widehat{d}_m(u;G,F)$ is likely to vary substantially over the range $[0,1]$. For instance, when dealing with moderate sample sizes, it is natural to expect its standard error to be particularly large around the tails of the distribution.

To account for the issues associated with both bias and variance, we aim to construct confidence bands of the form
\begin{equation}
\label{CIband}
CI_{1-\alpha,H_0}(u)=\Biggl[1-c_\alpha SE_{\widehat{d}_m}(u|H_0),1+c_\alpha SE_{\widehat{d}_m}(u|H_0)\Biggl],
\end{equation}
where $SE_{\widehat{d}_m}(u|H_0)$ denotes the standard error of $\widehat{d}_m(u;G,F)$ under $H_0$ at $u$ and, letting $\alpha$ be the desired significance level, $c_{\alpha}$, is the value that satisfies
{\fontsize{3.5mm}{3.5mm}\selectfont{
\begin{equation}
\label{significance}
\begin{split}
1-\alpha&=P\Bigl(1-c_{\alpha}SE_{\widehat{d}_m}(u|H_0)\leq \widehat{d}(u;G,F)\leq 1+c_{\alpha}SE_{\widehat{d}_m}(u|H_0),\text{ for all $u\in[0,1]$}\Bigl|H_0\Bigl)\\
&=P\biggl(\max_{u} \biggl|\frac{\widehat{d}_m(u;G,F)-1}{SE_{\widehat{d}_m}(u|H_0)}\biggl|\leq c_{\alpha}\biggl|H_0\biggl).\\
\end{split}
\end{equation}}}
Despite tube-formulas can be used to approximate \eqref{CIband}  in an asymptotic, continuous regime \citep[e.g.,][]{algeri20}, such approach does not apply to the  discrete setting, when  the sample size is not  sufficiently large to rely on asymptotic approximations, or when model selection is performed (see Sections \ref{extensions} and \ref{COVID}). Therefore, to guarantee the generalizability of our procedure to all these situations, we rely on simulation methods to estimate $SE_{\widehat{d}_m}(u|H_0)$, $c_{\alpha}$, and construct the confidence bands in \eqref{CIband}.

\begin{figure*}[htb]
\begin{tabular*}{\textwidth}{@{\extracolsep{\fill}}@{}c@{}c@{}}
\hspace{-0.8cm}\includegraphics[width=79mm]{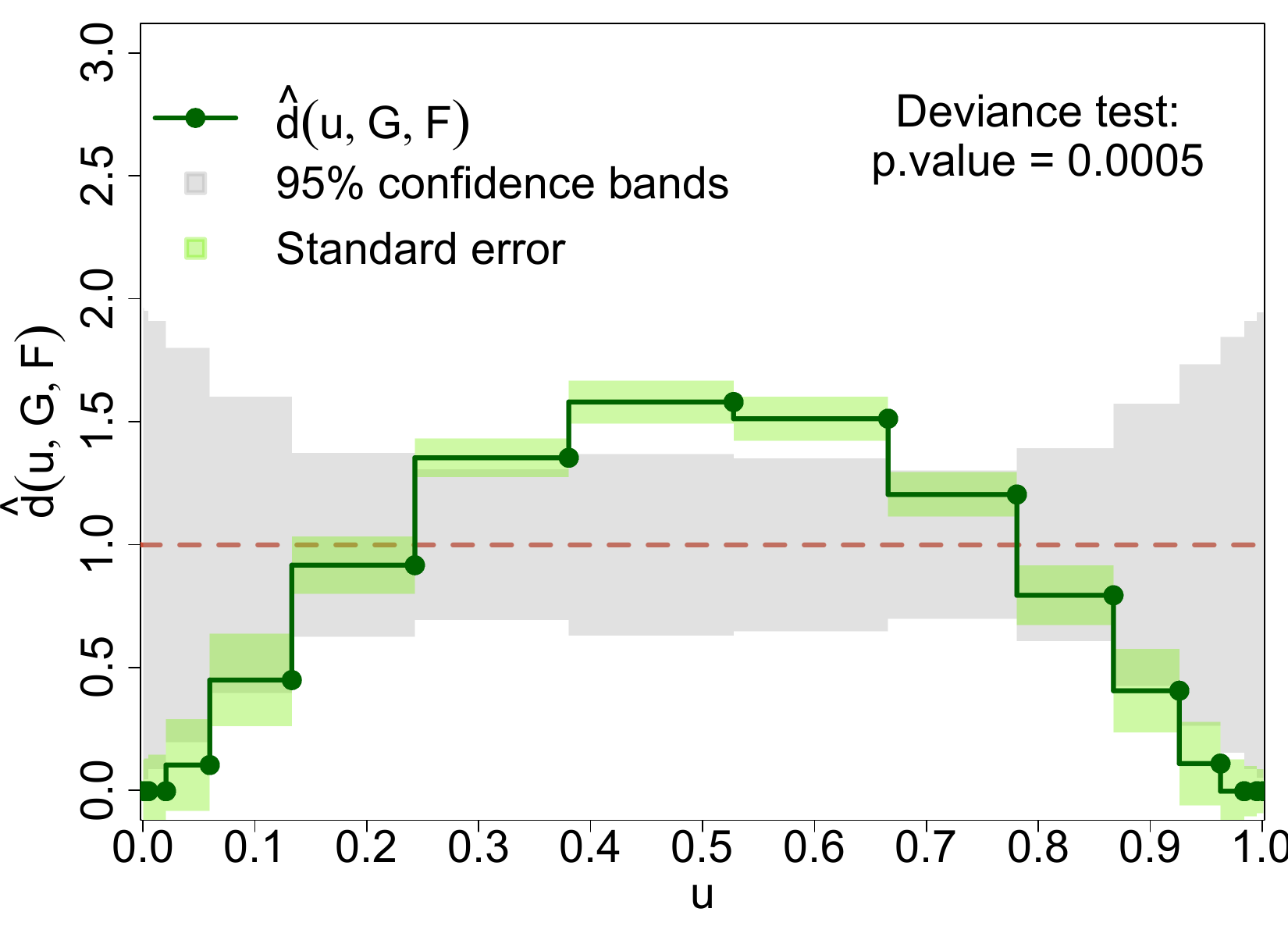} & \includegraphics[width=79mm]{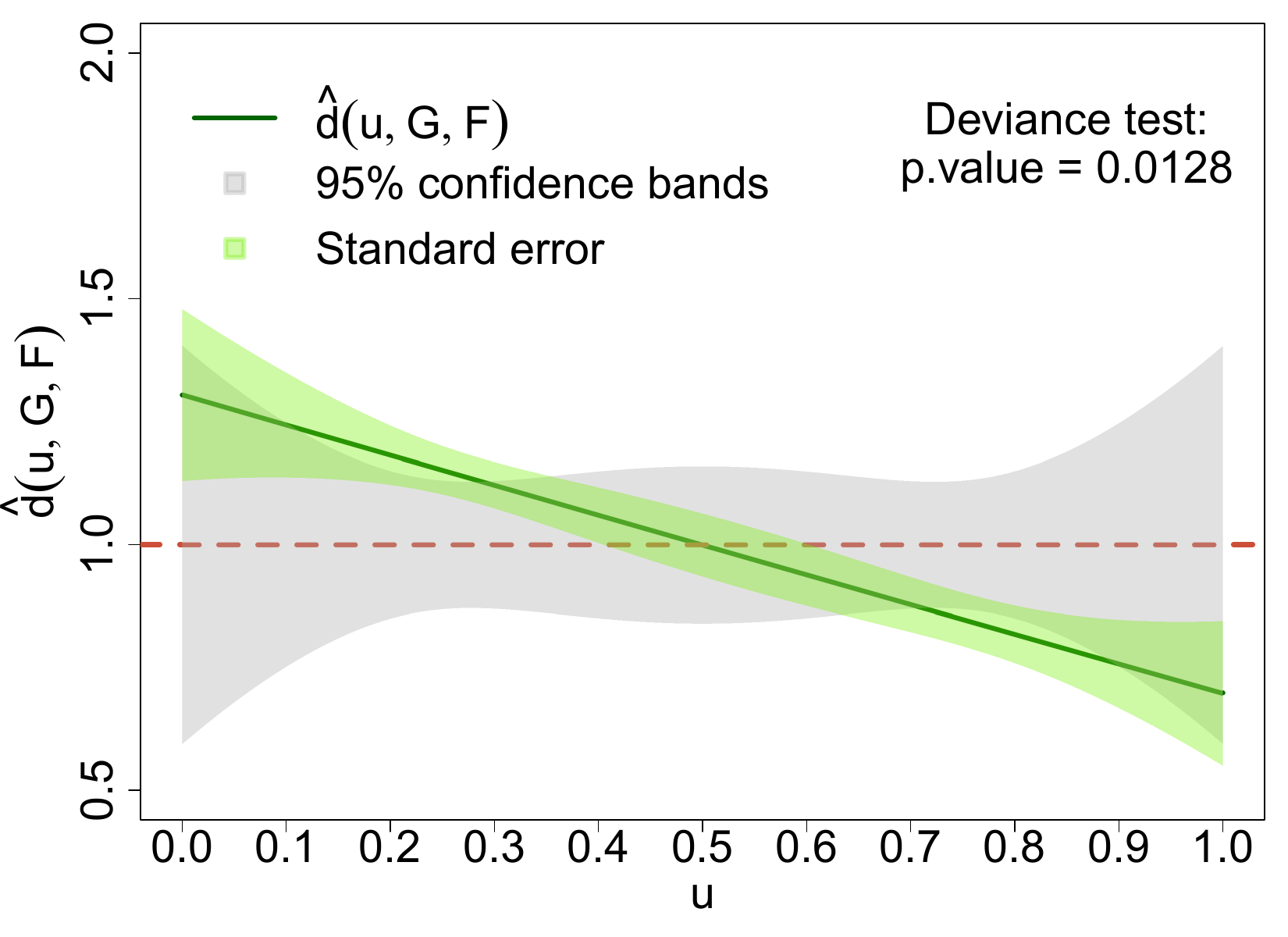}\\
\end{tabular*}
\caption{Examples of CD-plots when $X$ is discrete (left panel) and when $X$ is continuous (right panel). Similarly to Figure \ref{fig1}, in the left panel the hypothesized model is the pmf of a $\text{Poisson}(7.5)$ random variable truncated over the range $[0,15]$, whereas the data sample ($n=50$) is generated from a $\text{Binomial(15,0.5)}$ random variable. The hypothesized model in the right panel is the polynomial density in \eqref{poly}, whereas the data sample ($n=300$) is generated from a $\text{Normal(-15,15)}$ random variable truncated over the range $[0,30]$. The Gajek estimator of the comparison density in \eqref{gajek} is displayed using green lines with dots corresponding to the probability integral tranform of the mass points in the discrete case. In both cases, $m=2$, the gray bands correspond to the confidence bands obtained by simulating from the parametric start whereas the green bands are the standard errors obtained by simulating from Gajek's estimator in \eqref{gajek}. }
\label{fig2}
\end{figure*}
\subsection{Variance estimation via LP-smoothed bootstrap}
\label{LPscheme}
To compute the confidence bands in \eqref{CIband} via simulations, we estimate $SE_{\widehat{d}_m}(u|H_0)$ and $c_\alpha$ by drawing $B$ (e.g., if $\alpha=0.05$ set $B=10,000$ or $100,000$) Monte Carlo samples from $G$, namely
$\bm{x}^{(b)}_G$, $b=1,\dots,B$. Similarly, an approximate p-value to test \red{the hypotheses in} \eqref{test2} can be obtained by simulating the distribution of   \red{the deviance statistics in }\eqref{deviance}. The main steps of the simulation procedure are summarized in Algorithm 1.
The left panel of Figure \ref{fig2} shows the results obtained for our binomial example introduced in Section \ref{framework}. This plot is an example of \emph{Comparison Density plot} or \emph{CD-plot} which offers the advantage of visualizing where significant departures of the data distribution from the hypothesized parametric start occur. Specifically, if $\widehat{d}_m(u;G,F)$ is within the confidence bands (gray areas) over the unit interval, we conclude that there is no evidence that $f$ deviates significantly from $g$ anywhere over the range considered. Conversely, we expect significant departures to occur in regions where $\widehat{d}_m(u;G,F)$ lies outside the confidence bands. 
In our example, both the confidence bands as well as the p-values for the deviance test in \eqref{deviance} have been computed considering $B=10,000$ datasets simulated from the $\text{Poisson}(7.5)$ parametric start.

\red{Despite the bands in \eqref{CIband} only being affected by  the distribution of $\widehat{d}(u;G,F)$ under $H_0$, it is important to acquire a sense  sense of how small the variance of $\widehat{d}_m(u;G,F)$ is in the more general scenario where $G\not\equiv F$ in the more general scenario where $G\not\equiv F$. Specifically, despite the bias of $\widehat{d}_m(u;G,F)$ cannot be easily quantified in our setting, we can rely on simulation methods to obtain an estimate of its variance. Further considerations on the bias-variance trade-off in the context of estimation of linear functionals of $F$ are postponed to Section \ref{comparison}. }

As recognized by \citet{LPmode} (see also \citet{parzen2004}), \red{the decomposition in} \eqref{skewGest} naturally leads to a smoothed bootstrap scheme where samples from a smooth estimator of $f$ are obtained using an accept/reject algorithm where $g$ plays the role of the instrumental distribution. Conversely from the nonparametric bootstrap, in the smoothed bootstrap \citep{efron79} the sampling is performed from a smoothed version of $\tilde{F}$, and thus it avoids producing samples with several repeated values from the original data. \red{This is typically done using, for instance, kernel density estimators and which are known to be biased in the general setting. Despite that,  for suitable level of smoothing, the smoothed bootstrap may lead to   estimators with a lower mean square error (MSE) than those obtained with the nonparametric bootstrap \citep[e.g.][]{smooth1}.
Here, we focus on introducing an LP-smoothed bootstrap based on \red{the Gajek estimator in} \eqref{gajek} and we discuss its usefulness in estimating the variance of $\widehat{d}_m(u;G,F)$. A more detailed discussion on the advantages of using of the LP-smoothed bootstrap over the nonparametric bootstrap is postponed to Section \ref{comparison}. 
}

Denote with $\widehat{F}_m$ the estimate of $F$ associated with \red{the Gajek estimator in} \eqref{gajek} and let $x_G$ and $v$ be observations simulated from $G$ and $V\sim\text{Uniform}[0,1]$, respectively. We accept $x_G$ as an observation from $\widehat{F}_m$, i.e., we set $x_G=x_{\widehat{F}_m}$, if
\begin{equation}
\label{accept}
vM< \widehat{d}_m\bigl(G(x_G);G,F\bigl)\qquad\text{with $M=\max_{u\in [0,1]}\{\widehat{d}_m(u;G,F)\}$, }
\end{equation}
and we reject $x_G$ otherwise.
From a practical perspective, a smoothed bootstrap scheme based on \red{the accept/reject sampling scheme in} \eqref{accept} allows us to sample
from cells (or values) which have not been observed in the original data; therefore, it is particularly advantageous when dealing with discrete or categorized data and/or, more broadly, when the sample size is small (e.g., upper left panel of Figure \ref{fig1}). Moreover, expressing $\widehat{d}_m\bigl(G(x_G);G,F\bigl)$ using LP score functions naturally provides a smoothed bootstrap scheme that automatically generalizes to both the continuous and discrete settings.

The green bands in the left panel of Figure \ref{fig2} correspond to the estimated standard error of $\widehat{d}_m\bigl(G(x_G);G,F\bigl)$, namely $SE_{\widehat{d}_m}(u)$, for our binomial example and obtained by simulating $B=10,000$ datasets from $\widehat{F}_m$ via \red{the sampler in} \eqref{accept}.
In this setting, the parametric start is the pmf of a Poisson random variable and thus it is straightforward to first simulate observations from it, use them to compute \red{the confidence bands in} \eqref{CIband} and then accept/reject them as observations from $\widehat{F}_m$ to compute $SE_{\widehat{d}_m}(u)$. In practical applications, however, the hypothesized model $G$ does not always enjoy a simple formulation and thus \red{the accept/reject algorithm in} \eqref{accept} must be extended further to sample efficiently from both $G$ and $\widehat{F}_m$ as described in Section \ref{double}.

\subsubsection{A brief comparison with the nonparametric bootstrap}
\label{comparison}
In our setting, the level of smoothing is determined by $m$. As widely discussed by Young and co-authors \citep{smooth1, smooth2, smooth3, smooth4}, in the continuous case an adequate amount of smoothing may lead to estimators with a lower mean square error (MSE) than those obtained with the nonparametric bootstrap; however, these corrections are only up to the second order for large samples. In the discrete case, however, \citet{guerra} emphasize the advantages of the smoothing bootstrap in constructing confidence intervals but do not investigate the amount of smoothing required.

Despite an extensive comparison of the LP-smoothed bootstrap and the classical nonparametric bootstrap being beyond the scope of this paper, here we briefly discuss its advantages in the estimation of linear functionals of the type
\begin{equation}
\label{linear}
A(F)=< a,f>=\begin{cases}
\int a(x)f(x)\text{d} x &\quad\text{if $X$ is continuous,}\\
\sum_{r=1}^Ra(x_r)f(x_r) &\quad\text{if $X$ is discrete.}\\
\end{cases}
\end{equation}
Extensions to more general functionals can be derived as in \citet[][Sec. 3]{smooth1}.

A quantile representation of $A(F)$ is
\begin{equation}
\label{BD}
A(F)= < a,g\cdot d>= \int_0^1 b(u)d(u;G,F)\text{d} u,
\end{equation}
with $b(u)=a(G^{-1}(u))$.
Similarly, the estimates of $A(F)$ obtained by means of the classical nonparametric bootstrap and the LP-smoothed bootstrap, namely $A(\tilde{F})$ and $A(\hat{F}_m)$, respectively, can be specified as
\begin{equation}
\label{Aest}
A(\hat{F}_m)=\int_0^1 b(u)\widehat{d}_m(u;G,F)\text{d} u\quad \text{and} \quad A(\tilde{F})=\frac{1}{n}\sum_{i=1}^n a(X_i)=\int_0^1 b(u)\tilde{d}(u;G,F)\text{d} u,
\end{equation}
where $\tilde{d}$ is the empirical comparison density in \eqref{ecd}. Denote with $\dot{F}_m$ the cdf of \eqref{barton}, interestingly, when $\dot{f}_m$ is bona fide, $A(\tilde{F})=A(\widehat{F}_{\tilde{m}})=A(\dot{F}_{\tilde{m}})$. This aspect allows us to establish Theorem \ref{theo1} below; the proof is provided in Appendix 1.
\begin{theorem}
\label{theo1}
If $\dot{f}_m$ in \eqref{barton} is bona fide, $A(\widehat{F}_m)=A(\dot{F}_m)$ and the MSE of $A(\widehat{F}_m)$ can be lowered below that of $A(\tilde{F})$, for any $m\in \mathcal{M}=\{m: C(m)>0\}$ with
\begin{equation}
\label{delta}
C(m)=\sum_{j=m+1}^{\tilde{m}}\biggl[\frac{\sigma_{jj}}{n}-LP_j^2\biggl]B_j^2-2\sum_{m+1\leq j<k}^{\tilde{m}}\biggl[\frac{\sigma_{jk}}{n}-LP_jLP_k\biggl]B_jB_k,
\end{equation}
where $B_j=\int_0^1b(u)S_j(u;G)\text{d} u$, for all $j=1,2,\dots,{\tilde{m}}$ and $\tilde{m}=\min\{R-1,n-1\}$.
\end{theorem}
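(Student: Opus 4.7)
The first assertion, $A(\widehat{F}_m) = A(\dot{F}_m)$, is immediate from how \eqref{gajek} is constructed from \eqref{barton}: the Gajek estimator is a pointwise max-truncation of Barton's, with normalizing constant $K$ chosen so that the resulting $\widehat{f}_m$ is bona fide. When $\dot{f}_m$ is already bona fide, the truncation is inactive ($K=0$) and $\widehat{d}_m \equiv \dot{d}_m$, so $\widehat{F}_m = \dot{F}_m$ and the two linear functionals in \eqref{Aest} coincide.

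For the MSE comparison, the plan is to first rewrite the three relevant functionals as linear combinations of LP coefficients. Using $S_0 \equiv 1$, the definition $B_j = \int_0^1 b(u) S_j(u;G)\partial u$, and inserting \eqref{ecd} and \eqref{barton} into \eqref{Aest}, one obtains
\begin{align*}
A(\tilde{F}) &= B_0 + \sum_{j=1}^{\tilde{m}}\widehat{LP}_j B_j, \qquad A(\dot{F}_m) = B_0 + \sum_{j=1}^{m}\widehat{LP}_j B_j,\\
A(F) &= B_0 + \sum_{j=1}^{\tilde{m}} LP_j B_j,
\end{align*}
where the last expression is exact in the discrete case (and serves as a truncated representation otherwise). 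Invoking the moments in \eqref{momentsLP}, $A(\tilde{F})$ is unbiased with variance $\sum_{j,k=1}^{\tilde{m}} \sigma_{jk} B_j B_k / n$, whereas $A(\dot{F}_m)$ carries bias $-\sum_{j=m+1}^{\tilde{m}} LP_j B_j$ and variance $\sum_{j,k=1}^{m} \sigma_{jk} B_j B_k / n$, so that $\text{MSE}(A(\dot{F}_m))$ is the sum of this variance and the squared bias.

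Subtracting the two MSEs, splitting diagonal ($j=k$) from off-diagonal ($j<k$) contributions via the symmetry $\sigma_{jk} = \sigma_{kj}$, and expanding the squared bias $\bigl(\sum_{j>m} LP_j B_j\bigr)^2$ into analogous diagonal and symmetric cross pieces, should collapse the difference into the expression $C(m)$ in \eqref{delta}. The theorem's conclusion then follows, since $C(m) > 0$ is equivalent to $\text{MSE}(A(\widehat{F}_m)) = \text{MSE}(A(\dot{F}_m))$ lying strictly below $\text{MSE}(A(\tilde{F}))$. I expect the main obstacle to lie in this last step of algebraic bookkeeping: carefully matching the index ranges $\{1,\dots,m\}$ and $\{m+1,\dots,\tilde{m}\}$, reconciling full $(j,k)$ double sums with $j\leq k$ sums, and weighing the variance gained by truncating at $m$ against the squared bias that truncation introduces, so that every cross term appears with the correct sign in the final form of $C(m)$.
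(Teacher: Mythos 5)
Your plan is correct and follows the paper's own proof essentially step for step: the bona fide assumption immediately gives $A(\widehat{F}_m)=A(\dot{F}_m)$, and the MSE comparison is carried out exactly as you describe, by expanding $A(\tilde{F})$, $A(\dot{F}_m)$ and $A(F)$ in the LP basis, using \eqref{momentsLP} to obtain unbiasedness of $A(\tilde{F})$ together with the bias and variance of $A(\dot{F}_m)$, and identifying the resulting MSE difference with $C(m)$ in \eqref{delta}. The only outstanding step is the index/sign bookkeeping you yourself flag, which is precisely the computation the paper performs in its appendix.
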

\begin{algorithm}[h]
\label{algo1}
\caption{Bidirectional acceptance sampling}
{\fontsize{3.3mm}{3.3mm}\selectfont{
\begin{tabbing}
INPUTS: sample $x_1,\dots,x_n$, parametric start $g$, instrumental probability function $h$.\\
Step 1: Obtain $\widehat{d}_m(G(x);G,F)$ and $\widehat{f}_m(x)$ in \eqref{gajek}.\\
Step 2: Set $d(H(x);H,G)=\frac{g(x)}{h(x)}$ and $d(H(x);H,\widehat{F}_m)=\frac{\widehat{f}_m(x)}{h(x)}$ and obtain $M^*$ in \eqref{Mpluminus}.\\
Step 3: Sample $x_H$ from $H$ and $v$ from $\text{Uniform}[0,1]$:\\
\quad\qquad a.{if } $x_H\in D^-\text{ and}\begin{cases}
\text{if} & vM^*\leq d(H(x_H);H,\widehat{F}_m)\Rightarrow \text{ set }x_H=x_{\widehat{F}_m}=x_G;\\
\text{else if} &d(H(x);H,\widehat{F}_m)<vM^*\leq d(H(x);H,G)\Rightarrow \text{ set } x_H=x_G;\\
\text{else} &\Rightarrow \text{ reject $x_H$};\\
\end{cases}$\\
\qquad \quad b.{ else} $x_H\in D^+\text{ and}\begin{cases}
\text{if}& vM^*\leq d(H(x);H,G)\Rightarrow \text{ set } x_H=x_G=x_{\widehat{F}_m};\\
\text{else if} &d(H(x);H,G)<vM^*\leq d(H(x);H,\widehat{F}_m)\Rightarrow \text{ set } x_H=x_{\widehat{F}_m};\\
\text{else} &\Rightarrow \text{ reject $x_H$}.\\
\end{cases}$
\end{tabbing}}}
\end{algorithm}
Because \red{the  criterion in} \eqref{delta} depends on unknown quantities, $C(m)$ can be estimated by replacing $LP_j$ and $\sigma_{jk}$ with consistent estimators. Furthermore, it has to be noted that the estimation of $C(m)$ may lead to numerical issues for large $\tilde{m}$, but it is feasible for $\tilde{m}<<\infty$. It follows that, in practice, Theorem \ref{theo1} is only useful for sufficiently large discrete-valued samples and small $R$. Conversely, for large continuous-valued samples, a suitable level of smoothing can be identified as in \citet{smooth3} or \cite{smooth2} and noticing that $\dot{d}_m(u;G,F)$ enjoys a kernel representation with bandwidth parameter proportional to $m^{-1}$ \citep{LPmode}.

\begin{remark} Notice that Theorem \ref{theo1} is only valid when $\dot{f}_m$ is bona fide. When that is not the case, $\widehat{f}_m\neq \dot{f}_m$ and the equivalent of Theorem \ref{theo1} cannot be easily derived. However, it can be shown that the MSE of $A(\dot{F}_m)$ approximates that of
$A(\widehat{F}_m)$ when $<a,g\cdot (\dot{d}_m^--K)>$ approaches zero, with $\dot{d}_m^-=\max\{0,-\dot{d}_m\}$, and thus \red{the  criterion in} \eqref{delta} can still be used as an approximate criterion to identify suitable values of $m$.
\end{remark}


\section{The bidirectional acceptance sampling}
\label{double}
The \emph{bidirectional acceptance sampling} extends  \red{the rejection sampling} in\eqref{accept} by sampling
simultaneously from both $\widehat{F}_m$ and $G$. The idea at the core of the algorithm is to consider an instrumental probability function, $h$ with cdf $H$, from which it is easy to sample.
 Samples from $H$ are then accepted/rejected as samples from both $G$ and $\widehat{F}_m$, from $G$ or $\widehat{F}_m$ only or from neither $G$ nor $\widehat{F}_m$. The main steps of this algorithm are described below and are summarized in Algorithm 2.

In principle, given pairs of observations $(x_H,v)$ drawn from $X_H\sim H$ and $V\sim \text{Uniform}[0,1]$, respectively, samples   $G$ and/or $\widehat{F}_m$ can be obtained as in \red{the sampling scheme in} \eqref{accept}, i.e.,
\begin{align}
\label{accept2a}
\text{if }& vM_G<d\bigl(H(x_H);H,G\bigl)\Rightarrow \text{set $x_H=x_G$, reject $x_H$ otherwise;}\\
\label{accept2b}
\text{if }& vM_{\widehat{F}_m}<d\bigl(H(x_H);H,\widehat{F}_m\bigl)\Rightarrow \text{set $x_H=x_{\widehat{F}_m}$, reject $x_H$ otherwise;}
\end{align}
with $d(H(x);H,G)=\frac{g(x)}{h(x)}$, $d(H(x);H,\widehat{F}_m)=\frac{\widehat{f}_m(x)}{h(x)}$, $M_G=\max_{u\in [0,1]}\{d(u;H,G)\}$ and $M_{\widehat{F}_m}=\max_{u\in [0,1]}\{d(u;H,{\widehat{F}_m})\}$.
Interestingly, \red{the  samplers in} \eqref{accept2a} and \eqref{accept2b} can be easily combined by noticing that
\begin{equation}
\label{dgf}
\widehat{d}_m(G(x);G,F)=\frac{d(H(x);H,\widehat{F}_m)}{d(H(x);H,G)},
\end{equation}
where $d(H(x);H,\widehat{F}_m)$ and $d(H(x);H,G)$ are known exactly and play the role of \emph{auxiliary} comparison densities. Specifically, let
\begin{equation}
\label{Dpluminus}
D^+=\bigl\{x:\widehat{d}_m(G(x);G,F)\geq1\bigl\}\quad\text{and}\quad D^-=\bigl\{x:\widehat{d}_m(G(x);G,F)<1\bigl\},
\end{equation}
\red{that is, $D^+$ is the region of the support of $X$ where $G$ underestimates $F$ or it is exactly equal to $F$, whereas $D^-$ is the region where $G$ overestimates $F$. Moreover,} define
\begin{equation}
\label{Mpluminus}
\begin{split}
 M^*&=\max\{M^-,M^+\} \qquad\text{with}\\
M^+&=\max_{x\in D^{+}}d(H(x);H,\widehat{F}_m)\quad\text{and}\quad M^-=\max_{x\in D^{-}}d(H(x);H,G),
\end{split}
\end{equation}
\red{that is, $M^+$ is the maximum value of the comparison density between $H$ and $\widehat{F}_m$ and $M^-$  is the maximum value of the comparison density between $H$ and $G$. $M^*$ is the largest of the two.}
From \red{the samplers in} \eqref{dgf}, \eqref{Dpluminus} it follows that $d(H(x);H,\widehat{F}_m)\geq d(H(x);H,G)$ for any $x\in D^+$.
The opposite is true for $x\in D^-$. Therefore,
if $x_H\in D^+$, we have that $vM^*\leq d(H(x_H);H,G)$ if $vM^*\leq d(H(x_H);H,\widehat{F}_m)$ and thus $x_H$ is accepted as a sample from both $G$
and $\widehat{F}_m$. Conversely, if $vM^*\leq d(H(x_H);H,\widehat{F}_m)$ but $vM^*> d(H(x_H);H,G)$, then $x_H\in D^+$ is accepted as a sample from $\widehat{F}_m$
but not from $G$. Finally, $x_H\in D^+$ is rejected whenever $vM^*> d(H(x_H);H,\widehat{F}_m)$. Similar reasoning applies for the case of $x_H\in D^-$.

\begin{theorem}
\label{theo2}
Let $\lambda_{\widehat{F}_m}=<\widehat{f}_m(x),I{\{x\in D^-\}}>$ and $\lambda_{G}=<g(x),I{\{x\in D^-\}}>$.
The bidirectional acceptance sampling enjoys the following properties:
\begin{enumerate}
\item it allows us to sample from $\widehat{F}_m$ and $G$,
\item its acceptance rate is $\frac{1}{M^*}$ for both $\widehat{F}_m$ and $G$,
\item for every $N$ observations drawn from $H$, the total number of evaluations of $d(\cdot;H,G)$ and $d(\cdot;H,\widehat{F}_m)$ is always less than or equal to $2N$ and it converges
almost surely to $2N-\Delta N$, with $\Delta=\frac{1}{M^{*}}[1+\lambda_{\widehat{F}_m}-\lambda_G]\in [0,\frac{1}{M^*}]$.
\end{enumerate}
\end{theorem}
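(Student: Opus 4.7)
The plan is to tackle the three claims in order, beginning with correctness of the sampling and then doing the bookkeeping for the evaluation count.

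For parts (1) and (2), my plan is to reduce the analysis to the standard accept--reject lemma applied separately to the two targets. The key observation is that on $D^-$ we have $\widehat{d}_m(G(x);G,F)<1$, which via \eqref{dgf} gives $d(H(x);H,\widehat{F}_m)<d(H(x);H,G)$; symmetrically on $D^+$. Therefore, when $x_H\in D^-$ and $vM^*\leq d(H(x_H);H,\widehat{F}_m)$, the inequality $vM^*\leq d(H(x_H);H,G)$ holds automatically, and the algorithm is safe to accept $x_H$ for both targets using only the smaller comparison density. Combining cases, a draw $x_H$ is accepted as a sample from $\widehat{F}_m$ if and only if $vM^*\leq d(H(x_H);H,\widehat{F}_m)$, so
\begin{equation*}
P(x_H = x_{\widehat{F}_m})=\int h(x)\frac{d(H(x);H,\widehat{F}_m)}{M^*}\partial x =\frac{1}{M^*}\int \widehat{f}_m(x)\partial x = \frac{1}{M^*},
\end{equation*}
and conditioning on acceptance yields density $\widehat{f}_m$. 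The same computation with $G$ in place of $\widehat{F}_m$ handles the other target, establishing (1) and (2).

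For part (3), I will count how many evaluations of $d(\cdot;H,G)$ and $d(\cdot;H,\widehat{F}_m)$ Algorithm~2 performs on a single draw $x_H$. By inspection of Algorithm~2, if the first test (evaluating only the smaller of the two comparison densities in the relevant region) succeeds, the algorithm accepts for both targets and stops after one evaluation; otherwise it evaluates the second comparison density, resulting in two evaluations total. So the per-draw count is at most $2$ and equals $1$ precisely on the event $\{x_H=x_G=x_{\widehat{F}_m}\}$. Computing
\begin{equation*}
P(x_H=x_G=x_{\widehat{F}_m})
= \frac{1}{M^*}\Bigl[\int_{D^-}\widehat{f}_m(x)\partial x + \int_{D^+} g(x)\partial x\Bigl]
= \frac{1+\lambda_{\widehat{F}_m}-\lambda_G}{M^*}=\Delta
\end{equation*}
yields expected evaluations per draw of $2-\Delta$. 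The deterministic upper bound of $2N$ is immediate, and the strong law of large numbers applied to the i.i.d.\ per-draw counts gives almost-sure convergence of the total to $(2-\Delta)N$.

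Finally, I will verify $\Delta\in[0,1/M^*]$. Because $\widehat{d}_m<1$ on $D^-$, we have $\widehat{f}_m(x)<g(x)$ pointwise on $D^-$, so integrating gives $\lambda_{\widehat{F}_m}\leq \lambda_G$, yielding $\Delta M^*\leq 1$. For the lower bound, $\lambda_G-\lambda_{\widehat{F}_m}\leq \lambda_G\leq 1$, so $\Delta M^*\geq 0$. The only step that requires care is the first one, specifically justifying that accepting for both targets after a single evaluation on $D^\pm$ produces samples with the correct joint coupling rather than merely correct marginals; this is handled cleanly by the pointwise inequality between the two auxiliary comparison densities on each region, which is the real content of the argument.
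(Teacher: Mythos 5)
Your proof is correct and follows essentially the same route as the paper: the key fact in both is the pointwise ordering $d(H(x);H,\widehat{F}_m)\gtrless d(H(x);H,G)$ on $D^{+}$ versus $D^{-}$, the same acceptance-probability integrals give the rate $1/M^{*}$, and the strong law of large numbers handles the evaluation count, with $\Delta$ arising exactly as the probability of joint acceptance. The only difference is organizational: you collapse the case analysis so that acceptance for each target coincides with a standard accept--reject event (and identify the single-evaluation event with joint acceptance), whereas the paper carries out the explicit event decomposition $E_1^{\pm},E_2^{\pm}$ and the proportion bookkeeping $p^{-},q^{-},q^{+}$ — a mild streamlining, not a different argument.
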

Compared with \eqref{accept2a} and \eqref{accept2b}, Algorithm 2 reduces substantially the number of evaluations of $d(\cdot;H,G)$ and $d(\cdot;H,\widehat{F}_m)$ and can increase the acceptance rates for both $\widehat{F}_m$ and G. Theorem \ref{theo2} below summarizes these aspects and guarantees the validity of the bidirectional acceptance sampling. The proof of Theorem \ref{theo2} is given in the Supplementary Material.

\begin{figure*}[htb]
\begin{tabular*}{\textwidth}{@{\extracolsep{\fill}}@{}c@{}c@{}}
\hspace{-0.8cm}\includegraphics[width=78mm]{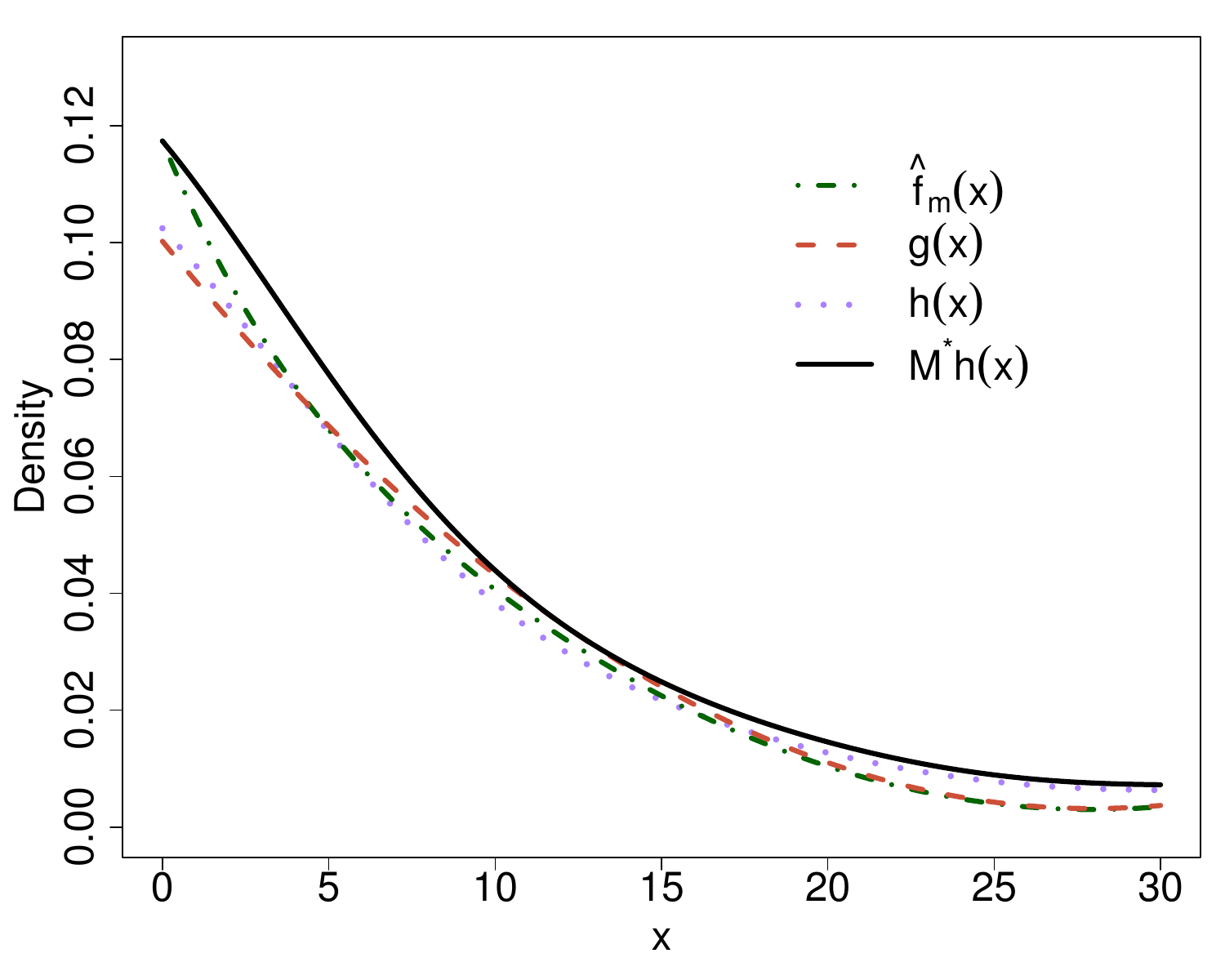} &\hspace{-0.1cm} \includegraphics[width=78mm]{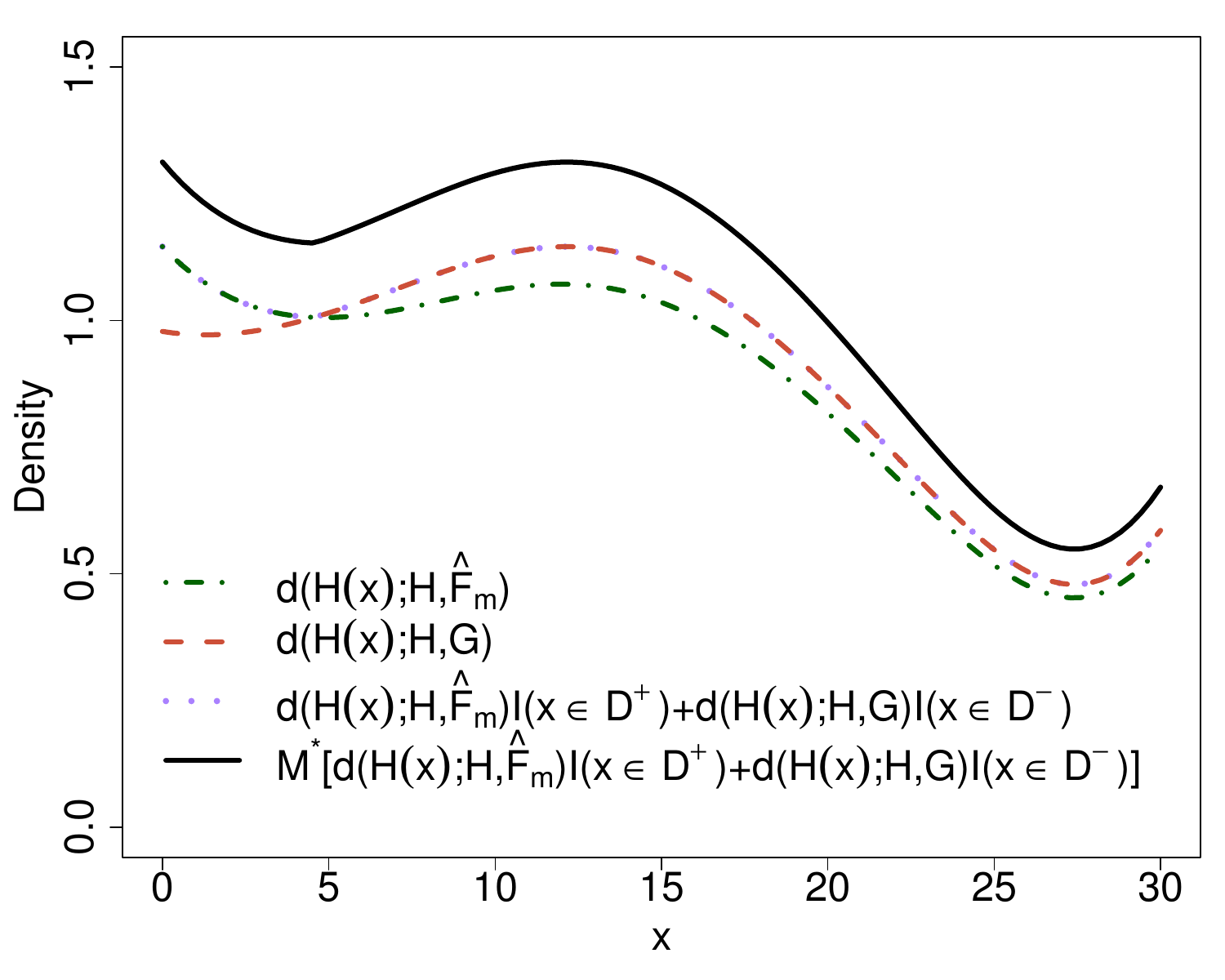}\\
\end{tabular*}
\caption{Choosing a suitable instrumental density $h$. The left panel shows how   $h$ in \eqref{mixture} (purple dotted line) compares with respect to the densities $g$ (red dashed line) and $\widehat{f}_m$ (green dotted--dashed line) from which we aim to simulate. The right panel shows the respective comparison densities with the purple dotted line corresponding to the function being optimized  in \eqref{choosingH}. }
\label{findingh}
\end{figure*}
The first property in Theorem \ref{theo2}   guarantees the validity of the algorithm, i.e., it ensures that the resulting samples are effectively drawn from the desired distributions $\widehat{F}_m$ and $G$. The second property ensures that the acceptance rate is the same for both $\widehat{F}_m$ and $G$ and, as discussed in more detail in Section \ref{choices}, it can be lowered below that of both \red{samplers in} \eqref{accept2a} and \eqref{accept2b} for suitable choices of $h$. Finally, the third property guarantees that the number of evaluations of $d(\cdot;H,G)$ and
$d(\cdot;H,\widehat{F}_m)$ is reduced by at least a factor of $N$ compared with \red{the accept/reject algorithms in} \eqref{accept2a} and \eqref{accept2b}. This is a substantial advantage in terms of efficiency of the algorithm, mostly when dealing with complex functional forms for $g$.

\begin{remark}
Notice that, in principle, the acceptance sampling algorithms in \eqref{accept} and Algorithm 2 can be implemented considering any smooth estimator of the form of \eqref{skewGest}. However, when considering estimators that are not bona fide, such as \eqref{barton}, it is easy to show that the resulting samples are
not drawn from $\dot{F}_m$ but rather by its bona fide counterpart in the formulation of Devroye--Gy\"orfi (see Table \ref{families}). Consequently,  its acceptance rate differs from $1/M^*$ by a multiplicative factor of $K_{\dot{f}_m^+}=<\dot{f}_m,I{\{\dot{f}_m> 0\}}>$. Similarly, when $g$ is only known up to a normalizing constant, its acceptance rate differs from $1/M^*$ by a multiplicative factor of $K_G=<g,1>$.
\end{remark}

\subsection{Choice of $h$}
\label{choices}
Despite the choice of the instrumental probability function $h$ in Algorithm 2 being arbitrary, it is should be simple enough so that it is (i) easy to sample from and (ii) sufficiently flexible to generalize to different settings. Possible choices of $h$ that satisfy these two criteria include  the density of a mixture of normal random variables when $X$ is continuous with unbounded support, 
 the density of a mixture of truncated normal random variables when $X$ is continuous with bounded support and the pmf of a mixture of Poisson and/or negative binomial random variables when $X$ is discrete.

Notice that $h$ plays a fundamental role because it affects both the acceptance rate and the computational efficiency of the sampling scheme. Therefore, once a parametric form for $h$ is selected, one can calibrate its parameters to minimize $M^*$, i.e.,
\begin{equation}
\label{choosingH}
h=\arg_{h}\min\{M^*\}=\arg_{h}\min\max_x\frac{g(x)}{h(x)}\biggl[\widehat{d}_m(G(x);G,F)I_{\{x\in D^+\}}+I_{\{x\in D^-\}}\biggl].
\end{equation}
Finally, the acceptance rate of $\widehat{F}_m$ and $G$ is reduced below that of \red{the  samplers in} \eqref{accept2a}-\eqref{accept2b} whenever $M^*\leq \min\{M_G, M_{\widehat{F}_m}\}$.

The right panel of Figure \ref{fig2} shows the CD-plot obtained for our truncated normal example. Confidence bands, deviance p-value and standard errors have been computed by simultaneously sampling $B=10,000$ datasets from $G$ and $\widehat{F}_m$ by means of Algorithm~2. In this case, the instrumental density $h$ is chosen to be the pdf of a mixture of truncated normals over $[0,30]$ with three components. The mixture weights, means and variances have been chosen to be the solutions of \eqref{choosingH} and lead to the density
\begin{align}
\label{mixture}
h(x)=&0.012\phi_{[0,30]}(34.919,5.694)+ 0.466 \phi_{[0,30]}( 6.251,11.953)\\
& + 0.522 \phi_{[0,30]}(-5.331,8.008),
\end{align}
where $\phi_{[0,30]}(\mu,\sigma)$ denotes the pdf of a normal distribution with mean $\mu$, standard deviation $\sigma$, and truncated over the range $[0,30]$. The resulting value for $M^*$ is $1.146$ which guarantees an acceptance rate of $87.284\%$ for both $G$ and $\widehat{F}_m$ and it is such that $M^*=1.146\approx M_{\widehat{F}_m}\approx M_G$. Therefore, it leads to approximately the same acceptance rate as \red{the accept/reject algorithms  in} \eqref{accept2a} and \eqref{accept2b}, 
\begin{algorithm}[!h]
\label{algo2}
\caption{LP-smoothed inference}
{\fontsize{3.3mm}{3.3mm}\selectfont{
\begin{tabbing}
{INPUTS:} sample observed $\bm{x}=(x_1,\dots,x_n)$, parametric start $g$, significance level $\alpha$, \\
\quad\qquad\qquad number of LP score functions $m$, number of Monte Carlo replicates $B$, \\
\quad\qquad\qquad instrumental probability function $h$ (optional).\\
{Step 1:} Estimate $\bm{\beta}$ via \eqref{MLE} on $\bm{x}$.\\
{Step 2:} Estimate $\widehat{LP}_{j,\widehat{\bm{\beta}}}$, $j=1,\dots,m$, via \eqref{LPjbeta} on $\bm{x}$ .\\
{Step 3:} Select the $\widehat{LP}_{j,\widehat{\bm{\beta}}}$ coefficients via \eqref{BIC} and set all the others to zero.\\
{Step 4:} Compute \eqref{deviance} and call it $D_{\text{obs}}$.\\
{Step 5:} Obtain $\widehat{d}_m$ and $\widehat{f}_m$ via \eqref{gajek_beta}.\\
{Step 6:} For $b=1,\dots, B:$\\
\qquad\qquad Obtain samples $\bm{x}^{(b)}_G$ from $G_{\widehat{\beta}}$ and $\bm{x}^{(b)}_{\widehat{F}_m}$ from $\widehat{F}_m$ via \eqref{accept} or Algorithm 2; \\
\qquad\qquad\red{$\bm{x}^{(b)}_G$ and $\bm{x}^{(b)}_{\widehat{F}_m}$ must be of the same size as $\bm{x}$.} \\
\qquad\qquad {A.} On $\bm{x}^{(b)}_G$:\\
\qquad\qquad\qquad{i.} Estimate $\bm{\beta}$ via \eqref{MLE} and call it $\widehat{\bm{\beta}}^{(b)}_{G}$.\\
\qquad\qquad\qquad {ii.} Estimate $\widehat{LP}_{j,\widehat{\bm{\beta}}_G}$, $j=1,\dots,m$, via \eqref{LPjbeta}.\\
\qquad\qquad\qquad{iii.} Set to zero nonsignificant $\widehat{LP}^{(b)}_{j,\widehat{\bm{\beta}}_G}$ coefficients via \eqref{BIC}.\\
\qquad\qquad\qquad {iv.} Compute $D^{(b)}$ via \eqref{deviance}.\\
\qquad\qquad\qquad {v.} Estimate $d$ via \eqref{gajek_beta} and call it $\widehat{d}^{(b)}_{m,{G}}(u)$.\\
\qquad\qquad {B.} On $\bm{x}^{(b)}_{\widehat{F}_m}$:\\
\qquad\qquad\qquad{i.} Estimate $\bm{\beta}$ via \eqref{MLE} and call it $\widehat{\bm{\beta}}^{(b)}_{\widehat{F}_m}$.\\
\qquad\qquad\qquad {ii.} Estimate $\widehat{LP}_{j,\widehat{\bm{\beta}}_{\widehat{F}_m}}$, $j=1,\dots,m$, via \eqref{LPjbeta}.\\
\qquad\qquad\qquad{iii.} Set to zero nonsignificant $\widehat{LP}^{(b)}_{j,\widehat{\bm{\beta}}_{\widehat{F}_m}}$ coefficients via \eqref{BIC}.\\
\qquad\qquad\qquad {iv.} Estimate $d$ via \eqref{gajek_beta} and call it $\widehat{d}^{(b)}_{m,{\widehat{F}_m}}$(u).\\
{Step 7:} For each $u \in [0,1]$:\\
\qquad\qquad{A.} $\hat{\bar{d}}_{m,G}(u;G,F)=\frac{1}{B}\sum_{b=1}^B\widehat{d}^{(b)}_{m,G}(u)$\\
\qquad\qquad{B.} $SE_{\widehat{d}_m}(u|H_0)=\frac{1}{B}\sum_{b=1}^B\bigl(\widehat{d}^{(b)}_{m,G}(u)-\hat{\bar{d}}_{m,G}(u)\bigl)^2$\\
\qquad\qquad{C.} $\Delta(u)^{(b)}=\Bigl|\frac{\widehat{d}^{(b)}_{m,G}(u)-1}{SE_{\widehat{d}_m}(u|H_0)}\Bigl|$\\
\qquad\qquad{D.} $\hat{\bar{d}}_{m,\widehat{F}_m}(u;G,F)=\frac{1}{B}\sum_{b=1}^B\widehat{d}^{(b)}_{m,\widehat{F}_m}(u)$\\
\qquad\qquad{E.} $SE_{\widehat{d}_m}(u)=\frac{1}{B}\sum_{b=1}^B\bigl(\widehat{d}^{(b)}_{m,\widehat{F}_m}(u)-\hat{\bar{d}}_{m,\widehat{F}_m}(u)\bigl)^2$\\
{Step 8:} Estimate the quantile of order $1-\alpha$ of the distribution of $\max_u\Delta(u)$, i.e.,\\
\qquad\qquad\qquad $c_\alpha=\Bigl\{c: \frac{1}{B}\sum_{b=1}^BI\{\max_u\Delta(u)^{(b)}\geq c\} = \alpha \Bigl\} $.\\
{Step 9:} Combine Step 7B and Step 8 and compute \eqref{CIband}.\\
{Step 10:} Estimate the deviance test p-value via  $P(D\geq D_{\text{obs}}|H_0)=\frac{1}{B}\sum_{b=1}^BI\{D^{(b)}\geq D_{\text{obs}}\}$.
\end{tabbing}}}
\end{algorithm}
while reducing the number of evaluations of the auxiliary densities. Figure \ref{findingh} compares $h$, $g$, $\widehat{f}_m$ and the respective comparison densities.

Finally, both the confidence bands as well as the p-values for the deviance test in \eqref{deviance} suggest that the polynomial parametric start in \eqref{poly} overestimates the tail on the distribution.

\section{Important extensions and further considerations}
\label{extensions}
For the sake of introducing the main elements of our framework, so far we have only considered situations where the parametric start $g$ is fully specified and $m$ is chosen arbitrarily. Here, we discuss how the constructs of Sections \ref{framework}--\ref{double} can be adequately extended to situations where $g$ depends on unknown parameters and $m$ is selected from a pool of possible values. Algorithm 3 summarizes the steps necessary to compute standard errors, confidence bands and deviance tests under this regime.

\subsection{Testing a composite hypothesis}
\label{composite}
In practical applications, the parametric start $g$ may depend on a set of free parameters, namely, $\bm{\beta}$ which need to be adequately estimated. This situation falls under the framework of \emph{smooth tests when the hypothesis is composite} \citep{barton56} and which can be tackled as follows.

Denote with $d(u;G_{\bm{\beta}},F)$ the comparison density between $F$ and $G$ when the latter depends on the unknown parameter $\bm{\beta}$. To obtain a suitable estimate of $d$ in this setting, we proceed by first estimating $\bm{\beta}$ on the observed sample $\bm{x}=(x_1,\dots,x_n)$ via Maximum Likelihood Estimation (MLE) and assuming $g$ to be the true model for the data, i.e.,
\begin{equation}
\label{MLE}
\widehat{\bm{\beta}}=\arg_{\bm{\beta}}\max \sum_{i=1}^n \log\{g(x_i,\bm{\beta})\}.
\end{equation}
An estimate of $d(u;G_{\bm{\beta}},F)$, can be obtained as described in Section \ref{framework} by simply setting $g= g_{\widehat{\bm{\beta}}}$ and $G= G_{\widehat{\bm{\beta}}}$. Specifically, letting $U=G_{\widehat{\bm{\beta}}}(X)$ the $\widehat{LP}_j$ estimates in \eqref{LPj} are replaced by their ``composite'' counterparts
\begin{equation}
\label{LPjbeta}
\widehat{LP}_{j,\widehat{\bm{\beta}}}=\frac{1}{n}\sum_{i=1}^nT_j(x_i;G_{\widehat{\bm{\beta}}})=\int_0^1S_j(u;G_{\widehat{\bm{\beta}}})\tilde{d}(u;G_{\widehat{\bm{\beta}}},F)\text{d} u,
\end{equation}
and thus, Gajek's estimator in \eqref{gajek} can be rewritten as

\begin{equation}
\label{gajek_beta}
\begin{split}
\widehat{f}_{m}(x)&=g_{\widehat{\bm{\beta}}}(x)\widehat{d}_{m}(G_{\widehat{\bm{\beta}}}(x);G_{\widehat{\bm{\beta}}},F)\quad\text{with}\\
\quad \widehat{d}_{m}(G_{\widehat{\bm{\beta}}}(x);G_{\widehat{\bm{\beta}}},F)&= \max{\biggl\{0,1+\sum_{j=1}^{m}\widehat{LP}_{j,\widehat{\bm{\beta}}}T_j(x;G_{\widehat{\bm{\beta}}})-K \biggl\}}.
\end{split}
\end{equation}
Similarly to Section \ref{LPboot}, the hypotheses being tested can be specified by replacing $d(u;G,F)$ with $d(u;G_{\bm{\beta}},F)$ in \red{the test in} \eqref{test} and $LP_j$ with $LP_{j,\bm{\beta}}$ in  \red{the test in} \eqref{test2}. Finally, adequate confidence bands, standard errors and deviance tests can be computed via  \red{the accept/reject sampling scheme in} \eqref{accept} and Algorithm 2, using  \red{the parametric Gajek estimator in} \eqref{gajek_beta} in place of \eqref{gajek} and estimating $\bm{\beta}$  \red{via MLE} as in \eqref{MLE} at each replicate as described in Steps 1, 6.A.i and 6.B.i of Algorithm 3.

\subsection{Data-driven smoothed inference}
\label{choose_m}
In principle, one could select the value of $m$ that appears to provide the best fit to the data. However, as extensively discussed in \citet{ledwina,kallenberg} (see also \citet[Ch.10]{rayner2009}) a poor choice of $m$ may lead to substantial loss of power. To overcome this problem, they introduce \emph{data-driven smooth tests} where the size of the orthonormal basis to be considered is selected using Schwartz's BIC criterion. A similar approach has been proposed by \citet{LPmode,LPcopula} in the context of LP modeling and, given an initial set of $m_\text{Max}$ coefficients, it consists in arranging them in decreasing magnitude i.e.,
$\widehat{LP}^2_{(1)}\geq \widehat{LP}^2_{(2)}\geq\dots\geq \widehat{LP}^2_{(m_\text{Max})}$, and selecting the largest $m$ for which $BIC(m)$ in \eqref{BIC} is maximum
\begin{equation}
\label{BIC}
BIC(m)=\sum_{(j)=1}^m\widehat{LP}^2_{(j)} - \frac{m \log n }{n}.
\end{equation}
\red{The remaining coefficients are set to zero. From a practical standpoint, by applying the methods discussed here to various settings, we have observed that the BIC criterion in \eqref{BIC} tends to lead to estimators which are either as smooth as the true density or, in some cases, overly smooth. This is particularly noticeable when compared with other selection criteria \citep[e.g.,][]{iGOF}. In general, however, one can also assess the reliability of the value of $m$ selected by the BIC criterion in \eqref{BIC} by visually inspecting the graph of the resulting estimator of $d$. Whereas, the criterion in Theorem \ref{theo1}, allows us to assess if the level of smoothing (in our case determined by the number $m$ of coefficients selected) is expected to lead to better estimators than those obtained using the classical non-parametric bootstrap. }

Unfortunately, because criteria such as \eqref{BIC} are data driven, they strongly affect the distribution of the $\widehat{LP}_j$ estimators; hence,  they introduce an additional source of variability one must account for in the variance estimation process while the inference must be adequately adjusted post-selection. In our setting, this can be easily done by repeating the selection process at each Monte Carlo and/or bootstrap replicate as summarized in Steps 3, 6.A.iii and 6.B.iii of Algorithm 3.

\begin{figure}[htb]
\begin{center}
\includegraphics[width=0.7\textwidth]{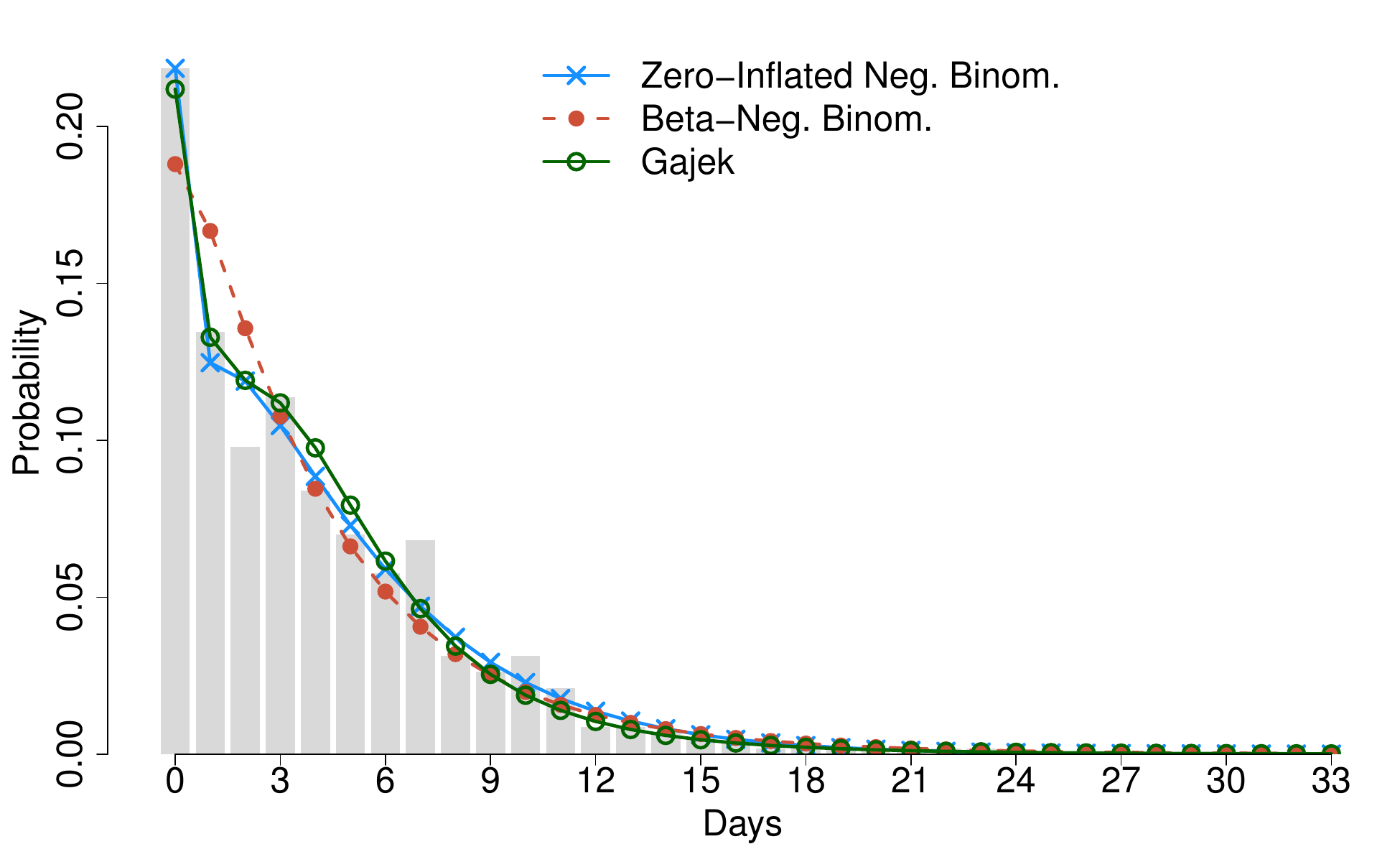}
\end{center}
\caption{Histogram of the data and fitted models. The red dots correspond to the pmf of a beta-negative binomial while the blue crosses are the pmf of a zero-inflated negative binomial. In both cases, the unknown parameters are estimated via MLE. The green circles correspond to Gajek's estimator in \eqref{gajek_beta}, with parametric start set to be the estimated beta-negative binomial model. }
\label{covidHist}
\end{figure}
\begin{figure*}[htb]
\begin{tabular*}{\textwidth}{@{\extracolsep{\fill}}@{}c@{}c@{}}
\hspace{-0.8cm}\includegraphics[width=79mm]{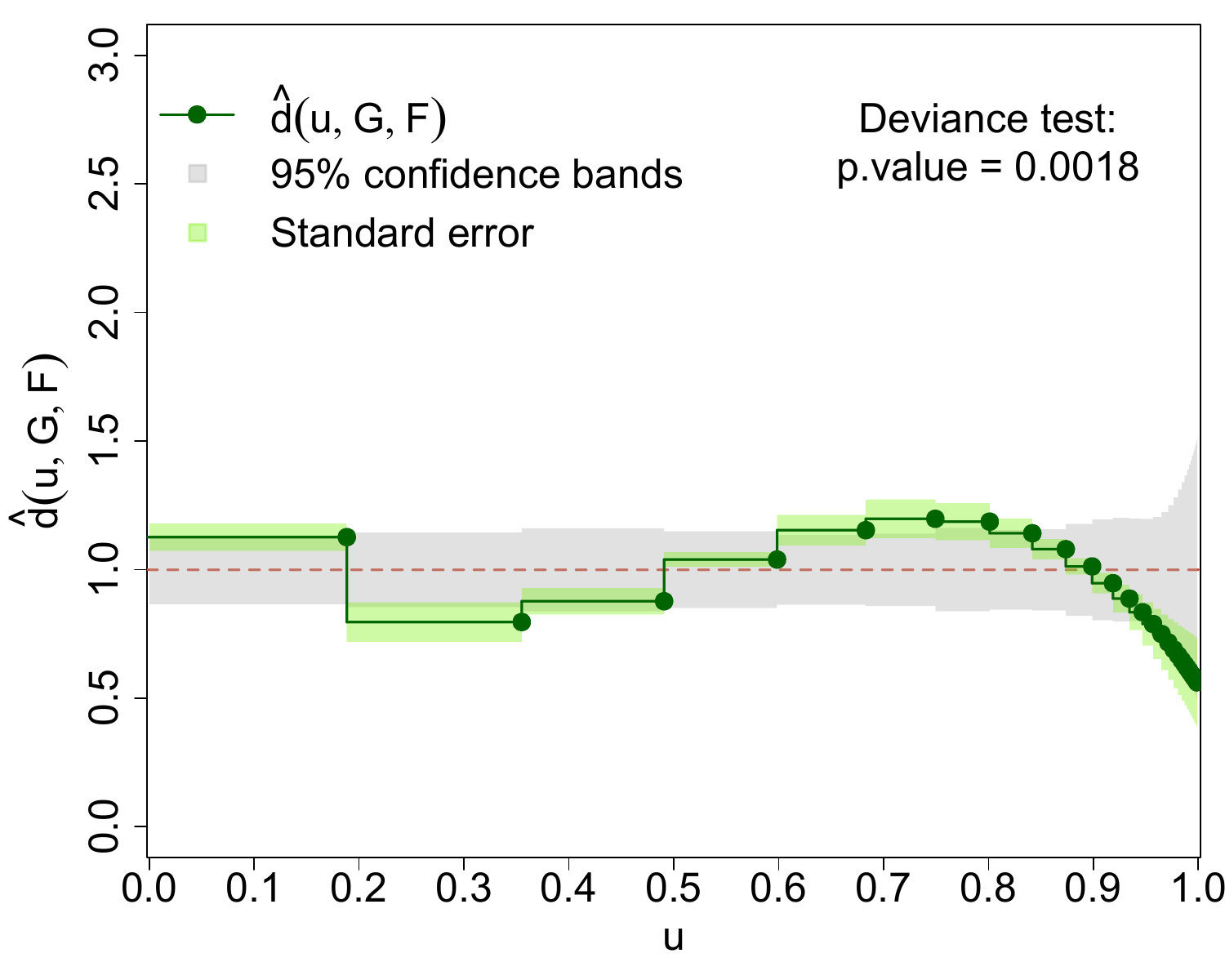} &  \includegraphics[width=79mm]{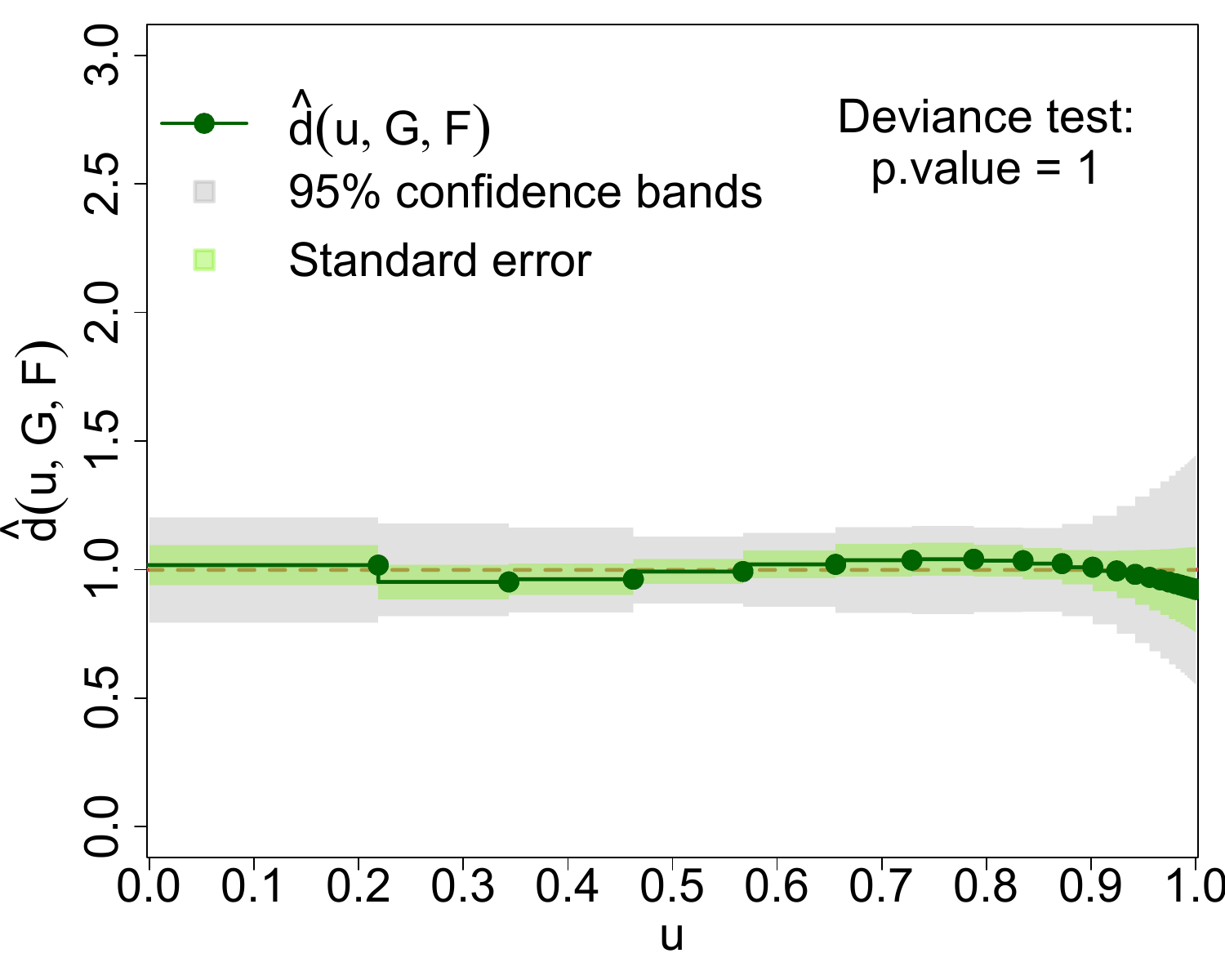}\\
\end{tabular*}
\caption{Deviance tests and CD-plots for beta-negative binomial \red{(BNB)} model (left panel) and zero-inflated negative binomial (ZINB) model (right panel). The comparison density estimated via \eqref{gajek_beta} is plotted using green dotted lines. The $LP_j$ coefficients have been selected via \red{the BIC criterion in} \eqref{BIC} from a pool of $m_{\text{Max}}=10$. The gray bands correspond to the $95\%$ confidence bands while the green bands refer to the standard errors. Confidence bands, standard errors and deviance tests have been computed via Algorithm 3.}
\label{CDplotsCovid}
\end{figure*}
\section{COVID-19 time of hospitalization from symptoms onset }
\label{COVID}
In December 2019, several cases of infections by a novel coronavirus named SARS-CoV-2 were reported in Wuhan, China. SARS-CoV-2 is responsible for a respiratory disease named COVID-19 whose symptoms may resemble those of the seasonal flu. Unfortunately, in some cases, COVID-19 can rapidly evolve into severe pneumonia, posing a severe risk to the survival of the affected patients. Today, COVID-19 has spread across the entire globe  causing \red{millions of fatalities}. Understanding the evolving epidemiology of COVID-19 is crucial in planning  health care resources \citep{garg}.

The analyses presented here   exploit the methods proposed in the previous sections to study the distribution of the time (in days) from symptoms onset to hospitalization of  patients. 
\red{Specifically, here we focus on a subset of the \texttt{COVID19\_open\_line\_list.csv} dataset provided  in \citet{xu2020} and  contains information of date of symptoms onset, date of hospitalization, age, gender and travel history of COVID-19 patients.  The original dataset was  downloaded by the authors in March 2020 from Kaggle\footnote{\texttt{https://www.kaggle.com/sudalairajkumar/novel-corona-virus-2019-dataset}}.  However, given the urge to acquire as much knowledge as possible on the ongoing COVID-19 pandemic, data related to the novel 2019 coronavirus are constantly updated on many data-sharing platforms.
Therefore, for the sake of the reproducibility of our results, in the Supplementary Material we provide the specific subset of the \texttt{COVID19\_open\_line\_list.csv} dataset, as of March 2020,  used to perform the analyses discussed in this section.  More recent versions of these data are available on
 the \texttt{github} data repository \texttt{nCoV2019}\footnote{\texttt{https://github.com/beoutbreakprepared/nCoV2019}}.}.  

We consider $n=572$ patients for which both the dates of hospitalization and  symptoms onset were recorded, and whose symptoms first appeared before or on the day of hospitalization. The histogram of the data is shown in Figure \ref{covidHist}.
Looking at the histogram of the data, it is easy to see that, for most of the patients, the date of onset symptoms coincides with the date of hospitalization.
From a statistical perspective, it is particularly interesting to understand how the excess of zeros should be modeled, as this may provide additional insights on the underlying cause.
For instance, this phenomenon may indicate that for many of the hospitalized patients, the symptoms increased severely within the next 24 hours since they first appeared. It is also possible, however, that two different processes may have simultaneously contributed to the excess zeros. For example, if the information regarding the date of symptoms onset was not available, the latter may have been recorded to be the same as the date of hospitalization. 
Here, we attempt to model the excess of zeros considering   a beta-negative binomial (BNB) distribution and a zero-inflated negative binomial (ZINB). The BNB and ZINB models will be considered in turn as a parametric start for our procedure.

First, we focus on the BNB assuming that such distribution can be particularly helpful to model a rapid decay of the number of days from symptoms onset and hospitalization. Unfortunately, when estimating the unknown parameters via MLE, the fitted distribution (red dots in Figure \ref{covidHist}) appears to underestimate the excess of zeros. Therefore, we construct the respective deviance test and CD-plot (see Algorithm 3) to assess where significant deviations from the true underlying model occur. The results obtained are collected in the left panel of Figure \ref{CDplotsCovid}. The deviance test rejects the BNB hypothesized and the CD-plot clearly shows that the most substantial departures occur in \red{proximity
of the first quartile of the hypothesized distribution $G$ and its right tail.}
Finally, an updated version of the BNB model is constructed using the Gajek estimator in \eqref{gajek_beta} and selecting the most ``significant'' $LP_j$ coefficients from a pool of $m_{\text{Max}}=10$ via \eqref{BIC}. The resulting estimate is
\begin{equation}
\label{bnb}
\widehat{f}_{(1)}(x)=g(x,33,11.098,1.218)[-0.159T_3(x,G_{11.098,1.218})],
\end{equation}
where $g(x,R,\xi,\nu)$ denotes the pmf of a BNB random variable with support $\{0,\dots,R\}$ and shape parameters $\xi$ and $\nu$, $G_{\xi,\nu}$ is the respective cdf. In this case, $R=33$, whereas the MLEs of $\xi$ and $\nu$ are 11.098 and 1.218, respectively. In $\widehat{f}_{(1)}(x)$, the subscript is used to denote that the model selection rule in \eqref{BIC} sets to zero all the $\widehat{LP}_j$ estimates with $j\neq 3$ and thus the only basis function considered is $T_3(x,G_{11.098,1.218})$ with $\widehat{LP}_3=-0.159$. The pmf estimate in \eqref{bnb} is plotted as green circles in Figure \ref{covidHist}. Interestingly, the estimated smoothed model suggests a better fit to the data and adequately accounts for the excess of zeros characterizing the underlying distribution.

Previous studies proposed in the literature on the analysis of hospitalization data have suggested that ZINB models are often preferred because they account for both zero-excesses and over dispersion \citep[e.g.,][]{weaver}. Here, we assess if a ZINB can be used to adequately model our data. Also in this case, the unknown parameters characterizing the ZINB model have been estimated via MLE while ``significant'' $LP_j$ coefficients have been selected as in \eqref{BIC} and setting $m_{\text{Max}}=10$. The estimated ZINB (blue crosses in Figure \ref{covidHist}) is very close to the estimator in \eqref{bnb}. Furthermore, both deviance test and CD-plot (see right panel Figure \ref{CDplotsCovid}) fail to reject the ZINB suggesting that the latter is a reliable model for our data.

\section{Discussion}
\label{discussion}
In this article, we combine smooth tests, smoothed bootstrap and LP modeling aiming to establish a unified framework for goodness-of-fit that naturally integrates modeling, estimation, inference and graphics.

As highlighted in Section \ref{framework}, LP modeling plays a crucial role to ensure the generalizability of the methods proposed. Specifically,  through the use of LP score functions  we   guarantee universality with respect to the continuous or discrete nature of the data under study. Furthermore, the comparison density is the key ingredient to simultaneously perform confirmatory and exploratory goodness-of-fit using a CD-plot. As shown in Section \ref{COVID}, the latter provides a detailed yet concise visualization of the discrepancies between the hypothesized and the true model in terms of quantiles of the distribution. 

The LP-based smoothed bootstrap scheme described in Section \ref{LPscheme} is the first step in the variance estimation process and easily applies to arbitrarily large samples. In this manuscript, we only briefly study its performance with respect to the nonparametric bootstrap. The reason being that, for our purposes, the smoothed bootstrap is ultimately used to adjust the estimate of the standard error of $\widehat{d}_m$ after a model selection procedure is implemented (see Section \ref{extensions}). Results such as Theorem \ref{theo1} cannot be easily derived in this setting and thus more work is needed to perform a fair comparison between the smoothed and the nonparametric bootstrap while implementing post-selection adjustments.

The construction of the CD-plot relies on simulation under both the hypothesized model $G$ and the estimated smooth model $\widehat{F}_m$. The bidirectional acceptance sampling introduced in Section \ref{double}  further extends the LP-smoothed bootstrap to situations where samples from $G$ cannot be easily obtained. In addition to its simple implementation, the bidirectional acceptance sampling allows simultaneous sampling from both $G$ and $\widehat{F}_m$ introducing a substantial computational gain.

All the tools presented in the first five sections of the manuscript are combined in Algorithm 3. Specifically, in this Algorithm we also implement extensions to situations where the postulated model depends on nuisance parameters or a model selection procedure is performed, as described in Section \ref{extensions}.
The latter aspect, in particular, poses a substantial challenge in deriving the asymptotic distribution of the estimator $\widehat{d}_m$. Hence the usefulness and the need for efficient simulation procedures.

Finally, to illustrate the applicability of the methods proposed in practical settings, we analyze COVID-19 hospitalization data. We show that our  approach correctly recovers the underlying distribution of the time (in days) from onset of symptoms to hospitalization of COVID-19 patients. Ultimately, the latter is shown to be a ZINB.

\red{Despite this article mainly focuses on the univariate case, extensions to the multivariate setting are discussed in \citet{iGOF}. Alternative solutions valid in the continuous data regime can be found in \citet{jitkrittum}, whereas \citet{kim} proposes a two-samples test applicable also to multivariate mixed-data distributions. More work is needed, however, to extend the tools presented here and those in \citet{iGOF} to the context of regression. }

\red{
\section*{Supplementary Material}{
\begin{description}
\item[File \texttt{\emph{Supp\_AlgeriZhang.pdf}:}] collects the technical proofs of Theorems \ref{theo1} and \ref{theo2}. 
\item[Folder \texttt{\emph{Codes\_AlgeriZhang}:}] collects the \texttt{R} files and the data needed to reproduce the results discussed in this article. The file \texttt{\emph{Codes\_and\_data\_description.pdf}} available in this folder provides a detailed description of each individual   file.
\end{description}
 }}

{\fontsize{3.5mm}{3.5mm}\selectfont{
\bibliography{biblioLP2}

\begin{thebibliography}{}

\bibitem[\protect\astroncite{Algeri}{2020}]{algeri20}
Algeri, S. (2020).
\newblock Detecting new signals under background mismodeling.
\newblock {\em Phys. Rev. D}, 101:015003.

\bibitem[\protect\astroncite{Algeri}{2021}]{iGOF}
Algeri, S. (2021).
\newblock Informative goodness-of-fit for multivariate distributions.
\newblock {\em arXiv preprint arXiv:2009.00503}.

\bibitem[\protect\astroncite{Anderson and Darling}{1954}]{anderson}
Anderson, T.~W. and Darling, D.~A. (1954).
\newblock A test of goodness of fit.
\newblock {\em Journal of the American statistical association},
  49(268):765--769.

\bibitem[\protect\astroncite{Barton}{1953}]{barton53}
Barton, D.~E. (1953).
\newblock On neyman's smooth test of goodness of fit and its power with respect
  to a particular system of alternatives.
\newblock {\em Scandinavian Actuarial Journal}, 1953(sup1):24--63.

\bibitem[\protect\astroncite{Barton}{1955}]{barton55}
Barton, D.~E. (1955).
\newblock A form of neyman's $\psi$2 k test of goodness of fit applicable to
  grouped and discrete data.
\newblock {\em Scandinavian Actuarial Journal}, 1955(1-2):1--16.

\bibitem[\protect\astroncite{Barton}{1956}]{barton56}
Barton, D.~E. (1956).
\newblock Neyman's test of goodness of fit when the null hypothesis is
  composite.
\newblock {\em Scandinavian Actuarial Journal}, 1956(2):216--245.

\bibitem[\protect\astroncite{De~Angelis and Young}{1992}]{smooth3}
De~Angelis, D. and Young, G.~A. (1992).
\newblock Smoothing the bootstrap.
\newblock {\em International Statistical Review/Revue Internationale de
  Statistique}, pages 45--56.

\bibitem[\protect\astroncite{Devroye and Gy\"orfi}{1985}]{DG}
Devroye, L. and Gy\"orfi, L. (1985).
\newblock Nonparametric density estimation. the l\_1 view.
\newblock {\em John Wiley. New York.}

\bibitem[\protect\astroncite{Efron}{1979}]{efron79}
Efron, B. (1979).
\newblock Bootstrap methods: Another look at the jackknife.
\newblock {\em Ann. Statist.}, 7(1):1--26.

\bibitem[\protect\astroncite{Gajek}{1986}]{gajek86}
Gajek, L. (1986).
\newblock On improving density estimators which are not bona fide functions.
\newblock {\em The Annals of Statistics}, 14(4):1612--1618.

\bibitem[\protect\astroncite{Garg et~al.}{2020}]{garg}
Garg, S., Kim, L., Whitaker, M., et~al. (2020).
\newblock Hospitalization rates and characteristics of patients hospitalized
  with laboratory-confirmed coronavirus disease 2019--covid-net, 14 states,
  march 1--30, 2020.
\newblock {\em MMWR. Morbidity and Mortality Weekly Report}, 69.

\bibitem[\protect\astroncite{Guerra et~al.}{1997}]{guerra}
Guerra, R., Polansky, A.~M., and Schucany, W.~R. (1997).
\newblock Smoothed bootstrap confidence intervals with discrete data.
\newblock {\em Computational statistics \& data analysis}, 26(2):163--176.

\bibitem[\protect\astroncite{Hall et~al.}{1989}]{smooth2}
Hall, P., DiCiccio, T.~J., and Romano, J.~P. (1989).
\newblock On smoothing and the bootstrap.
\newblock {\em The Annals of Statistics}, 17(2):692--704.

\bibitem[\protect\astroncite{Handcock and Morris}{2006}]{morris}
Handcock, M.~S. and Morris, M. (2006).
\newblock {\em Relative distribution methods in the social sciences}.
\newblock Springer Science \& Business Media.

\bibitem[\protect\astroncite{Hjort and Glad}{1995}]{hjort95}
Hjort, N.~L. and Glad, I.~K. (1995).
\newblock Nonparametric density estimation with a parametric start.
\newblock {\em The Annals of Statistics}, pages 882--904.

\bibitem[\protect\astroncite{Jitkrittum et~al.}{2018}]{jitkrittum}
Jitkrittum, W., Kanagawa, H., Sangkloy, P., Hays, J., Sch{\"o}lkopf, B., and
  Gretton, A. (2018).
\newblock Informative features for model comparison.
\newblock {\em arXiv preprint arXiv:1810.11630}.

\bibitem[\protect\astroncite{Kallenberg and Ledwina}{1997}]{kallenberg}
Kallenberg, W. C.~M. and Ledwina, T. (1997).
\newblock Data-driven smooth tests when the hypothesis is composite.
\newblock {\em Journal of the American Statistical Association},
  92(439):1094--1104.

\bibitem[\protect\astroncite{Ka{\l}uszka}{1998}]{kaluszka}
Ka{\l}uszka, M. (1998).
\newblock On the devroye-gy{\"o}rfi methods of correcting density estimators.
\newblock {\em Statistics \& probability letters}, 37(3):249--257.

\bibitem[\protect\astroncite{Kim et~al.}{2019}]{kim}
Kim, I., Lee, A.~B., and Lei, J. (2019).
\newblock Global and local two-sample tests via regression.
\newblock {\em Electronic Journal of Statistics}, 13(2):5253--5305.

\bibitem[\protect\astroncite{Ledwina}{1994}]{ledwina}
Ledwina, T. (1994).
\newblock Data-driven version of neyman's smooth test of fit.
\newblock {\em Journal of the American Statistical Association},
  89(427):1000--1005.

\bibitem[\protect\astroncite{Mukhopadhyay}{2016}]{LPfdr}
Mukhopadhyay, S. (2016).
\newblock Large-scale signal detection: A unified perspective.
\newblock {\em Biometrics}, 72(2):325--334.

\bibitem[\protect\astroncite{Mukhopadhyay}{2017}]{LPmode}
Mukhopadhyay, S. (2017).
\newblock Large-scale mode identification and data-driven sciences.
\newblock {\em Electronic Journal of Statistics}, 11(1):215--240.

\bibitem[\protect\astroncite{Mukhopadhyay and Parzen}{2014}]{LP14}
Mukhopadhyay, S. and Parzen, E. (2014).
\newblock Lp approach to statistical modeling.
\newblock {\em arXiv:1405.2601}.

\bibitem[\protect\astroncite{Mukhopadhyay and Parzen}{2020}]{LPcopula}
Mukhopadhyay, S. and Parzen, E. (2020).
\newblock Nonparametric universal copula modeling.
\newblock {\em Applied Stochastic Models in Business and Industry},
  36(1):77--94.

\bibitem[\protect\astroncite{Mukhopadhyay and Wang}{2020}]{LPksamples}
Mukhopadhyay, S. and Wang, K. (2020).
\newblock Nonparametric high-dimensional k-sample comparison.
\newblock {\em Biometrika (to appear)}.

\bibitem[\protect\astroncite{Neyman}{1937}]{neyman37}
Neyman, J. (1937).
\newblock Smooth test for goodness of fit.
\newblock {\em Scandinavian Actuarial Journal}, 1937(3-4):149--199.

\bibitem[\protect\astroncite{Parzen}{1983}]{parzen83}
Parzen, E. (1983).
\newblock Fun. stat quantile approach to two sample statistical data analysis.
\newblock Technical report, Texas A\&M University, College Station, Institue of
  Statistics.

\bibitem[\protect\astroncite{Parzen}{2004}]{parzen2004}
Parzen, E. (2004).
\newblock Quantile probability and statistical data modeling.
\newblock {\em Statistical Science}, 19(4):652--662.

\bibitem[\protect\astroncite{Parzen and Mukhopadhyay}{2013}]{LP13}
Parzen, E. and Mukhopadhyay, S. (2013).
\newblock United statistical algorithms, lp comoments, copula density,
  nonparametric modelling.
\newblock In {\em 59th ISI World Statistics Congress (WSC), Hong Kong}.

\bibitem[\protect\astroncite{Pearson}{1900}]{pearson}
Pearson, K. (1900).
\newblock On the criterion that a given system of deviations from the probable
  in the case of a correlated system of variables is such that it can be
  reasonably supposed to have arisen from random sampling.
\newblock {\em The London, Edinburgh, and Dublin Philosophical Magazine and
  Journal of Science}, 50(302):157--175.

\bibitem[\protect\astroncite{{R Core Team}}{2020}]{Rprogramming}
{R Core Team} (2020).
\newblock {\em R: A Language and Environment for Statistical Computing}.
\newblock R Foundation for Statistical Computing, Vienna, Austria.

\bibitem[\protect\astroncite{Rayner and Best}{1986}]{rayner86}
Rayner, J. C.~W. and Best, D.~J. (1986).
\newblock Neyman-type smooth tests for location-scale families.
\newblock {\em Biometrika}, 73(2):437--446.

\bibitem[\protect\astroncite{Rayner and Best}{1988}]{rayner88}
Rayner, J. C.~W. and Best, D.~J. (1988).
\newblock Smooth tests of goodness of fit for regular distributions.
\newblock {\em Communications in statistics-theory and methods},
  17(10):3235--3267.

\bibitem[\protect\astroncite{Rayner et~al.}{2009}]{rayner2009}
Rayner, J. C.~W., Thas, O., and Best, D.~J. (2009).
\newblock {\em Smooth tests of goodness of fit: using R}.
\newblock John Wiley \& Sons.

\bibitem[\protect\astroncite{Shapiro and Wilk}{1965}]{shapiro}
Shapiro, S.~S. and Wilk, M.~B. (1965).
\newblock An analysis of variance test for normality (complete samples).
\newblock {\em Biometrika}, 52(3/4):591--611.

\bibitem[\protect\astroncite{Silverman and Young}{1987}]{smooth1}
Silverman, B.~W. and Young, G.~A. (1987).
\newblock The bootstrap: to smooth or not to smooth?
\newblock {\em Biometrika}, 74(3):469--479.

\bibitem[\protect\astroncite{Thas}{2010}]{thas}
Thas, O. (2010).
\newblock {\em Comparing distributions}.
\newblock Springer.

\bibitem[\protect\astroncite{Weaver et~al.}{2015}]{weaver}
Weaver, C.~G., Ravani, P., Oliver, M.~J., Austin, P.~C., and Quinn, R.~R.
  (2015).
\newblock Analyzing hospitalization data: potential limitations of poisson
  regression.
\newblock {\em Nephrology Dialysis Transplantation}, 30(8):1244--1249.

\bibitem[\protect\astroncite{Xu et~al.}{2020}]{xu2020}
Xu, B., Gutierrez, B., Mekaru, S., Sewalk, K., Goodwin, L., Loskill, A., Cohn,
  E.~L., Hswen, Y., Hill, S.~C., Cobo, M.~M., et~al. (2020).
\newblock Epidemiological data from the covid-19 outbreak, real-time case
  information.
\newblock {\em Scientific Data}, 7(1):1--6.

\bibitem[\protect\astroncite{Young}{1994}]{smooth4}
Young, G.~A. (1994).
\newblock Bootstrap: More than a stab in the dark?
\newblock {\em Statistical Science}, pages 382--395.

\bibitem[\protect\astroncite{Zhang and Algeri}{2020}]{LPsmooth}
Zhang, X. and Algeri, S. (2020).
\newblock {\em LPsmooth: LP Smoothed Inference and Graphics}.
\newblock R package version 0.1.0.

\end{thebibliography}
}}
\end{document}